\newcommand{\dist}{\mbox{\rm dist}}
\newcommand{\len}{\mbox{\rm len}}
\newcommand{\inn}{in}
\newcommand{\outt}{out}
\newcommand{\ShortcutSmallD}{\mathsf{ShortcutSmallDiam}}
\newcommand{\ShortcutLargeD}{\mathsf{ShortcutLargeD}}
\newcommand{\HopsetLargeHop}{\mathsf{HopsetLargeHop}}
\newcommand{\HopsetSmallHop}{\mathsf{HopsetSmallHop}}
\newcommand{\DynamicAPSP}{\mathsf{DynamicAPSP}}
\theoremstyle{plain}
\newtheorem{thm}{Theorem}[section]
\newcommand{\BTHM}{\begin{thm}} \newcommand{\ETHM}{\end{thm}}
\newtheorem{cor}[thm]{Corollary}
\newcommand{\BCR}{\begin{cor}} \newcommand{\ECR}{\end{cor}}
\newtheorem{lem}[thm]{Lemma}
\newcommand{\BL}{\begin{lem}}   \newcommand{\EL}{\end{lem}}
\newtheorem{clm}[thm]{Claim}
\newcommand{\BCM}{\begin{clm}}   \newcommand{\ECM}{\end{clm}}
\newtheorem{prop}[thm]{Proposition}
\newcommand{\BP}{\begin{prop}}   \newcommand{\EP}{\end{prop}}
\newtheorem{assm}[thm]{Assumption}
\newcommand{\BASM}{\begin{assm}}   \newcommand{\EASM}{\end{assm}}
\newtheorem{question}[thm]{Question}
\theoremstyle{definition}
\newtheorem{defn}{Definition}[thm]
\newcommand{\BD}{\begin{defn}}   \newcommand{\ED}{\end{defn}}
\newtheorem{con}[thm]{Conjecture}
\newcommand{\BCONJ}{\begin{con}}   \newcommand{\ECONJ}{\end{con}}
\theoremstyle{definition}
\newtheorem{problem}[thm]{Problem}
\newcommand{\BPR}{\begin{problem}}   \newcommand{\EPR}{\end{problem}}
\newenvironment{rem}{\noindent{\bf Remark:~~}}{}
\newcommand{\BREM}{\begin{rem}} \newcommand{\EREM}{\end{rem}}
\newenvironment{discussion}{\noindent{\bf Discussion:~~\\}}{}
\newcommand{\BDIS}{\begin{discussion}} \newcommand{\EDIS}{\end{discussion}}
\newtheorem{obs}{Observation}[section]
\numberwithin{equation}{section}
\def\blackslug
\def\qed{\quad\blackslug\lower 8.5pt\null\par}
\newcommand{\poly}{{\rm poly}}
\newcommand{\Diam}{D}
\newtheorem{exmp}[thm]{Example}
\newtheorem{fact}[thm]{Fact}
\newcommand{\BEX}{\begin{exmp}} \newcommand{\EEX}{\end{exmp}}
\newcommand{\BF}{\begin{fact}}   \newcommand{\EF}{\end{fact}}
\newcommand{\Bcr}{\begin{techcorr}}
\newcommand{\Ecr}{\end{techcorr}}
\newcommand{\BDS}{\begin{description}}
\newcommand{\EDS}{\end{description}}
\newcommand{\BE}{\begin{enumerate}}
\newcommand{\EE}{\end{enumerate}}
\newcommand{\BI}{\begin{itemize}}
\newcommand{\EI}{\end{itemize}}
\newcommand{\BPF}{\begin{proof}}
\newcommand{\EPF}{\end{proof}}
\newcommand{\BB}{\begin{enumerate}}
\newcommand{\EB}{\end{enumerate}}
\title{New Diameter-Reducing Shortcuts and Directed Hopsets: \\ Breaking the $O(\sqrt{n})$ Barrier}
\author{
Shimon Kogan \\
        \small Weizmann Institute\\
        \small shimon.kogan@weizmann.ac.il
\and				
Merav Parter \thanks{This project is funded by the European Research Council (ERC) under the European Union’s Horizon 2020 research and innovation programme (grant agreement No. 949083).}\\
        \small Weizmann Institute \\
        \small merav.parter@weizmann.ac.il
}
\date{}
\begin{document}
\maketitle

\begin{abstract}
For an $n$-vertex digraph $G=(V,E)$, a \emph{shortcut set} is a (small) subset of edges $H$ taken from the transitive closure of $G$ that, when added to $G$ guarantees that the diameter of $G \cup H$ is small. Shortcut sets, introduced by Thorup in 1993, have a wide range of applications in algorithm design, especially in the context of parallel, distributed and dynamic computation on directed graphs.  A folklore result in this context shows that every $n$-vertex digraph admits a shortcut set of linear size (i.e., of $O(n)$ edges) that reduces the diameter to\footnote{The notation $\widetilde{O}(\cdot)$ hides poly-logarithmic terms in $n$.} $\widetilde{O}(\sqrt{n})$. Despite extensive research over the years, the question of whether one can reduce the diameter to $o(\sqrt{n})$ with $\widetilde{O}(n)$ shortcut edges has been left open. 
%On the lower bound side, [Haung and Pettie, SWAT'18] showed that there are $n$-vertex digraphs for which any linear shortcut set provides a diameter of $\Omega(n^{1/6})$. 

We provide the first improved diameter-sparsity tradeoff for this problem, breaking the $\sqrt{n}$ diameter barrier. Specifically, we show an $O(n^{\omega})$-time randomized algorithm\footnote{Where $\omega$ is the optimal matrix multiplication constant.} for computing a linear shortcut set that reduces the diameter of the digraph to $\widetilde{O}(n^{1/3})$. This narrows the gap w.r.t the current diameter lower bound of $\Omega(n^{1/6})$ by [Huang and Pettie, SWAT'18]. Moreover, we show that a diameter of $\widetilde{O}(n^{1/2})$ can in fact be achieved with a \emph{sublinear} number of $O(n^{3/4})$ shortcut edges. Formally, letting $S(n,D)$ be the bound on the size of the shortcut set required in order to reduce the diameter of any $n$-vertex digraph to at most $D$, our algorithms yield:
\[
S(n,D)=\begin{cases}
\widetilde{O}(n^2/D^3), & \text{for~} D\leq n^{1/3},\\
\widetilde{O}((n/D)^{3/2}), & \text{for~} D> n^{1/3}~.
\end{cases}
\]
%We also generalize Haung and Petti's lower bounds for various diameter ranges. 
We also extend our algorithms to provide improved $(\beta,\epsilon)$ hopsets for $n$-vertex weighted directed graphs. %These structures suffered from a similar barrier of $\sqrt{n}$ hopbound obtained with a linear number of shortcut edges. 
%
%
%improved constructions for $(\epsilon,\beta)$ directed hopsets
%
%A $(\beta,\epsilon)$-hopset is a set of weighted edges (sometimes called shortcuts) which, when added to a positively weighted directed graph, admit $\beta$-hop paths with weight no more than $(1+\epsilon)$ times the true shortest-path distances. Let $H(G,\beta,\epsilon)$ be the minimum size of a $(\beta,\epsilon)$-hopset for graph $G$. We will show that 
%\[
%H(G,\beta,\epsilon) = 
%\begin{cases}
%\tilde{O}\left( \frac{n^2}{\beta^3} \epsilon^{-2} \log W \right), &\text{if $\Diam \leq n^{1/4}$;} \\
%\tilde{O}\left( \left(\frac{n}{\beta}\right)^{5/3} \epsilon^{-2} \log W \right),  &\text{if $\Diam > n^{1/4}$.}
%\end{cases}
%\]
\end{abstract}

\newpage

\tableofcontents

\newpage
\setcounter{page}{1}
\section{Introduction}
Reachability and shortest path computations in directed graphs are among the most fundamental graph theoretic problems. In many computational settings (e.g., parallel and distributed), the key complexity measures for solving these problems depend on the graph's diameter. The diameter of a directed graph $G$ is the minimum value $d$ such that any $u$-$v$ shortest path in $G$, if exists, has at most $d$ edges. For example, in the PRAM model, the span of many graph problems depends on the diameter of the graph, e.g., \cite{LiuJS19}.
%, e.g., of the single-source reachability problem. 
In the context of data-structures, the tradeoff between the size of the graph and its diameter determines the space vs. query time (respectively) of reachability queries \cite{Hesse03}. 

\indent Motivated by this inherent dependency on the graph's diameter, Thorup \cite{Thorup92} introduced the notion of \emph{diameter shortcutting sets}. Informally, this refers to a set of edges taken from the transitive closure of the graph $G$ whose addition to $G$ reduces the diameter of the graph substantially. In other words, the shortcut edges preserve the transitive closure of the graph (and possibly also the distances) without increasing much the size of the graph. The key complexity measures are the number of shortcut edges and the resulting diameter of the augmented graph. The na\"{\i}ve solution of adding all edges of the transitive closure provides a perfect diameter bound of $1$, but at the cost of adding $\Theta(n^2)$ edges, which is clearly undesirable. 

Since their introduction, shortcutting sets have been studied quite extensively from a combinatorial and algorithmic perspectives. They currently provide the core computational step in the parallel, distributed and dynamic computation of reachability and directed shortest paths \cite{KleinS97,HenzingerKN14,HenzingerKN15,ForsterN18,LiuJS19,Fineman20,GutenbergW20a,BernsteinGW20}. 
%Despite their algorithmic centrality, our understanding of the tradeoff between the size of the shortcut set and the obtained diameter is quite lacking. 
To this date, the only known tradeoff for these structures is given by a folklore randomized algorithm, attributed to Ullman and Yannakakis \cite{UllmanY91}, that for every integer $D\geq 1$, computes a shortcut set of size $O((n/D)^2\log^2 n)$ and provides a diameter bound of $D$. Berman et al. \cite{BermanRR10} improved the size bound to $O((n/D)^2)$. Very recently, there has been a sequence of breakthrough results that compute shortcut edges that (almost) achieve this tradeoff in almost optimal time in the sequential, parallel and distributed settings \cite{Fineman20,LiuJS19,KarczmarzS21}. Liu, Jambulapati and Sidford \cite{LiuJS19} extended the framework of Fineman \cite{Fineman20} to compute, in nearly linear time, a shortcut set of $\widetilde{O}(n)$ edges that reduces the diameter of the graph to $D=O(n^{1/2 + o(1)})$ (i.e., almost as obtained by the folklore algorithm). In their paper, \cite{LiuJS19} noted that:
\begin{quote}
\emph{From the perspective of constructability the true tradeoff between the diameter and number of added edges is not known..it is not known how to improve the straightforward random construction in any regime.}
\end{quote}
%
%
%Recent work has focus on providing efficient computation of these shortcuts, leading to breakthrough results in the context of work vs. depth bounds of parallel reachability. It has been widely observed that the diameter of the graph determines the depth of the computation, and thus much effort has been invested in improving the known tradeoff on any possible regime. 
%
%
%This extensive research on the topic, it was not known how to improve upon this folklore result in any regime. Most algorithmic applications, such as the single-source reachability in the parallel setting, focused on the efficient constructions of nearly linear shortcut sets that provide a diameter of $O(\sqrt{n})$ (i.e., plugging $t=\Theta(\sqrt{n})$). 
%
%the most useful point in this tradeoff is $t=\Theta(\sqrt{n})$, which provides a linear size shortcut set and a diameter bound of $O(\sqrt{n})$. . Understanding the entire regime 
%
%Due to the centrality of shortcut edges, recent work have provided breakthrough results that were able to compute shortcuts with this state-of-the-art tradeoff in nearly linear time. 
%In the lack of any other known other 
%
%A tantalizing open problem that have been mentioned in many of this prior work is
The question of whether one can improve this tradeoff has been stated explicitly and implicitly in many of the prior works on this topic. Since most algorithmic applications mainly focus on linear-size shortcuts, we ask:
\begin{question}
Is it possible to break the $\sqrt{n}$ diameter barrier by adding $\widetilde{O}(n)$ shortcut edges?
\end{question}
%This problem has been stated both implicitly and explicitly in many prior works on this topic. 
While not much progress has been provided on the upper bound side, there has been more movement on the lower bound aspects of the problem. Hesse \cite{Hesse03} presented a construction of an $n$-vertex digraph with $m = \Theta(n^{19/17})$ edges and diameter $d=\Theta(n^{1/17})$, that requires $\Omega(m n^{1/17})$ shortcut edges to reduce its diameter to $o(d)$. This lower bound result refutes Thorup's conjecture \cite{Thorup92} concerning the existence of linear-size shortcuts that reduce the diameter of any graph to poly-logarithmic\footnote{This conjecture in fact holds for special graph families, such as planar graph \cite{Thorup92}.}.
Huang and Pettie \cite{HuangP18} extended and improved the lower bound of Hesse \cite{Hesse03} by presenting a construction of $n$-vertex $m$-edge graph for which any $O(m)$-size shortcut set cannot reduce the diameter to below $\Omega(n^{1/11})$, and any $O(n)$-size shortcut set cannot reduce the diameter to below $\Omega(n^{1/6})$. Hesse \cite{Hesse03} concludes his lower-bound paper by saying:
\begin{quote}\cite{Hesse03}
\emph{Achieving tight bounds on the problem of reducing diameter by shortcutting directed graphs will undoubtedly require ingenious new constructions or algorithms, which will be a valuable contribution to theoretical computer science.}
\end{quote}  

In this paper, we provide the first combinatorial progress over the folklore algorithm. Our constructions improve the tradeoff in the entire regime. For example, we present a randomized algorithm that computes a collection of $\widetilde{O}(n)$ shortcut edges and obtains a diameter bound of $O(n^{1/3})$. We also show that a diameter of $O(\sqrt{n})$ can in fact be achieved by adding only $\widetilde{O}(n^{3/4})$ edges. The precise formulation of our new tradeoffs appear in Theorem \ref{main_theorem_math_ver1}. Our algorithm is based on a new shortcutting framework, that in comparison to the folklore algorithm, randomly samples both vertices and \emph{paths} from a given collection of precomputed path collection. We hope that our techniques will pave the way towards improved parallel and distributed algorithms in digraphs. 

\paragraph{Approximate Directed Hopsets.} In the context of distance computation in directed graphs, it is desired to reduce the number of hops of directed shortest paths. The \emph{shortest path diameter} of a (possibly weighted) digraph $G$ is the minimum value $d$ such that every pair $u,v$ has a shortest path in $G$ with at most $d$ edges (denoted as \emph{hops}). Similarly to reachability, the parallel and distributed computation of directed shortest paths also heavily depend on the shortest path diameter of the graph. For any $\epsilon \geq 0$, and a positive integer $\beta$, an $(\beta,\epsilon)$-hopset is a set of weighted shortcut edges which, when added to the graph, admit $\beta$-hop paths of weight at most $(1+\epsilon)$ time the true shortest path distances. Formally, $(\beta,\epsilon)$-hopsets are defined as follows. For any vertices $u,v \in V$, define $\dist_G^{(\beta)}(u,v)$ to be the minimum length $u$-$v$ path with at most $\beta$ edges (hops). If there is no such path, then $\dist_G^{(\beta)}(u,v)=\infty$. 

\begin{defn}[$(\beta,\epsilon)$\textbf{-hopsets}] An $(\beta,\epsilon)$-hopset for a weighted digraph $G=(V,E,\omega)$ is a set of weighted edges $H$, such that, for any vertex pair $u,v \in V$, it holds that
$$\dist_{G}(u,v) \leq \dist_{G \cup H}^{(\beta)}(u,v)\leq (1+\epsilon)\dist_{G}(u,v)~.$$
The parameter $\beta$ is the \emph{hopbound}, and $|H|$ is the size of the hopset.
\end{defn}

For $\epsilon=0$, the $(\beta,\epsilon)$ hopset is denoted as \emph{exact hopset}. Hopsets also suffer from the $\sqrt{n}$ barrier on the number of hops achievable with a linear shortcut set. Cao, Fineman and Russell \cite{CaoFR20,CaoFRSPAA20,CaoFR21} presented the first nearly work-efficient parallel algorithms with $\widetilde{O}(m)$ work and span of $n^{1/2+o(1)}$. At the heart of their solution lies a computation of $(\beta,\epsilon)$ hopsets with hopbound $\beta=n^{1/2+o(1)}$ with $\widetilde{O}(n)$ edges (i.e., almost the same tradeoff as that of the folklore algorithm). 

In contrast to directed graphs, for undirected graphs hopsets with improved bounds are known \cite{Cohen00,ShiS99,KleinS97}. Elkin and Nieman \cite{ElkinN19} and Huang and Pettie \cite{HuangP19} showed that Thorup-Zwick emulators \cite{ThorupZ06} provide nearly optimal $(\beta,\epsilon)$ hopsets with hopbound of $\beta=(\log k/\epsilon)^{\log k}$ and $O(n^{1+1/k}\log n)$ edges, for every integer $k\geq 1$. Abboud, Bodwin and Pettie \cite{AbboudBP18} proved that these constructions are nearly tight for constant values of $k$. We note that the problem of computing tight $(\beta,\epsilon)$ hopsets for \emph{directed} graphs, already in the unweighted case, is fairly open. Due to the extra requirement to approximate also the distances, computing $(\beta,\epsilon)$ hopsets is in general more involved than diameter shortcutting sets. %Our goal is to break the $\sqrt{n}$ barrier for linear number of edges. 

\subsection{Our Contribution.}\label{sec:cont}
For every positive integers $n\geq D$, let $S(n,\Diam)$ denote the universal upper bound on the number of shortcut edges (i.e., edges from the transitive closure) required to be added to $n$-vertex digraphs in order to reduce its diameter to at most $D$. Let $G^{\star}$ denote the transitive closure of a digraph $G$. Our key contribution is in providing new shortcutting algorithms with improved
$S(n,\Diam)$ bounds that in particular, break the long-standing $\sqrt{n}$ barrier. 
%\begin{mdframed}[hidealllines=true,backgroundcolor=gray!25]
\BTHM\label{main_theorem_math_ver1}[New Diameter Shortcutting Sets]
For every graph $G$ on $n$ vertices, there is an $\tilde{O}(n^\omega)$-time randomized algorithm for computing a shortcut set of size:
\[
S(n,D) = 
\begin{cases}
\tilde{O}\left( n^2/D^3\right), &\text{for $D \leq n^{1/3}$;} \\
\tilde{O}( (n/D)^{3/2}),  &\text{for $D > n^{1/3}$,}
\end{cases}
\]
where $\omega$ is the optimal matrix multiplication constant. 
\ETHM
%\end{mdframed}
This improves considerably over the state-of-the-art bound of $S(n,D)=O((n/D)^{2})$ obtained by the folklore algorithm \cite{UllmanY91,BermanRR10}. These bounds should be compared with the 
lower bound result of Huang and Pettie \cite{HuangP18} that provides $S(n,\Diam)=\Omega(n)$ for $\Diam=O(n^{1/6})$. 
In addition, for every $0 <\epsilon <1/4$ and $\delta>0$, Hesse \cite{Hesse03} provided a diameter lower bound of $\Omega(n^{\delta})$ obtained with $O(n^{2-\epsilon})$ edges, i.e., $S(n,\Diam)=\Omega(n^{2-\epsilon})$ for $D=O(n^{\delta})$. 
%This provides linear-size shortcuts with diameter bound of $\widetilde{O}(n^{1/3})$, improving upon the long-standing upper bound of $O(\sqrt{n})$. Moreover, for obtaining a diameter bound of $O(\sqrt{n})$, we provide the first construction of a sublinear shortcut set with $\widetilde{O}(n^{3/4})$ edges. 
The literature has also considered $O(m)$-size shortcuts for $n$-vertex graphs with $m$ edges, referred to as \emph{matching shortcuts} by \cite{Thorup92}. Solving the bounds of Theorem \ref{main_theorem_math_ver1}
for $n^2/D^3=m$, provides a matching shortcut with diameter 
$D=\widetilde{O}\left(\left(\frac{n^2}{m} \right)^{1/3}\right)$.
%
%\begin{mdframed}[hidealllines=true,backgroundcolor=gray!25]
%\vspace{-5pt}
%\BCR\label{main_theorem_math_ver2}[Improved Matching Shortcuts]
%For every connected graph $G$ on $n$ vertices with $m$ edges, there is an $\tilde{O}(n^\omega)$-time algorithm for computing a shortcut set of size $O(m)$ obtaining diameter 
%$D=\widetilde{O}\left(\left(\frac{n^2}{m} \right)^{1/3}\right)$.
%\ECR
%\end{mdframed}
This improves over the diameter bound of $\widetilde{O}(n/\sqrt{m})$ obtained by the folklore algorithm.
On the lower bound side, Huang and Pettie \cite{HuangP18} presented a construction of $n$-vertex graph for which any matching shortcuts cannot reduce the diameter to below $\Omega(n^{1/11})$. We next generalize the lower bound constructions of \cite{Hesse03,HuangP18} to provide a smooth tradeoff between the diameter and the number shortcut sets. 
%\begin{mdframed}[hidealllines=true,backgroundcolor=gray!25]
%\vspace{-5pt}
\BTHM\label{pettie_lower_bound}[Extended Lower Bound Constructions]
For every $n$ and $D \geq n^{1/6}$, there is an $n$-vertex digraph $G$ for which any addition of $\Theta((n/D)^{6/5})$ shortcut edges in $G^{\star}$ cannot reduce the diameter to below $D$. Thus $S(n,D)=\Omega((n/D)^{6/5})$. 
\ETHM
%\end{mdframed}
In particular, for $D=n^{1/3}$, we get $S(n,D) = \Omega(n^{4/5})$ and $S(n,D) = \widetilde{O}(n)$. For $D=\sqrt{n}$, we have $S(n,D) = \Omega(n^{3/5})$ and $S(n,D) = \widetilde{O}(n^{3/4})$. Therefore, the new upper and lower bound results of Theorems \ref{main_theorem_math_ver1} and \ref{pettie_lower_bound} considerably tighten the bounds on the $S(n,D)$ function. 
%graphs with $n^{1+\epsilon}$ edges such that at least $n^{2-\epsilon}$ edges needs to be added from the transitive closure in order to reduce the diameter to $n^\epsilon$. Hence, 

\paragraph{Implication for Transitive-Closure (TC) Spanners.} For a given (unweighted) digraph $G=(V,E)$ with transitive closure $G^{\star}$, a subgraph $H \subseteq G^{\star}$ is a $k$-TC spanner of $G$ if 
for every $(u,v)\in G^{\star}$ it holds that $\dist_H(u, v) \leq k$. TC-spanners have been formally defined by Bhattacharyya et al. \cite{BhattacharyyaGJRW09}, and have shown to be useful in many different contexts, such as access control, property testing, and data structures. The interested reader is encouraged to read the beautiful survey on TC spanners by Raskhodnikova \cite{Raskhodnikova10}. The size vs. stretch tradeoff of TC spanners is very closely related to that of the diameter shortcutting sets. TC spanners, however, have been studied mainly from an \emph{approximation} perspective. As a direct corollary of Theorem \ref{main_theorem_math_ver1}, we also improve the state-of-the-art approximation ratio for this problem.  In particular, by employing the transitive reduction technique of Aho, Garey and Ullman \cite{AhoGU72}, we have:
%\begin{mdframed}[hidealllines=true,backgroundcolor=gray!25]
\begin{cor}\label{cor:TC}
There is an algorithm that provides an approximation ratio of $\widetilde{O}(n/k^3)$ for $k$-TC spanners. 
\end{cor}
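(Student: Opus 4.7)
The plan is to combine the transitive reduction (TR) algorithm of Aho, Garey, and Ullman \cite{AhoGU72} with the new shortcut construction of Theorem \ref{main_theorem_math_ver1}.

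First, compute the transitive reduction $TR(G)\subseteq G^{\star}$ in $O(n^{\omega})$ time; this is the minimum-size subgraph of $G^{\star}$ preserving the reachability relation of $G$. Let $T=|TR(G)|$. Since any $k$-TC spanner $H$ of $G$ satisfies $H^{\star}=G^{\star}$, we immediately obtain the lower bound $\mathrm{OPT}\geq |TR(G^{\star})|=T$. This is the only combinatorial lower bound we will need.

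Second, apply Theorem \ref{main_theorem_math_ver1} to the $n$-vertex digraph $TR(G)$ (whose transitive closure equals $G^{\star}$) with target diameter $k$, yielding a shortcut set $H\subseteq G^{\star}$ of size $\widetilde{O}(n^{2}/k^{3})$ for $k\leq n^{1/3}$ (and $\widetilde{O}((n/k)^{3/2})$ otherwise), such that $TR(G)\cup H$ has diameter at most $k$ over all reachable pairs. Output $TR(G)\cup H$ as the $k$-TC spanner. Correctness is immediate: for every $(u,v)\in G^{\star}$ we have $\dist_{TR(G)\cup H}(u,v)\leq k$ by the diameter bound, while $TR(G)\cup H\subseteq G^{\star}$ by construction.

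Finally, bound the approximation ratio. The output has size at most $T+\widetilde{O}(n^{2}/k^{3})$. A standard preprocessing step allows us to assume $T=\Omega(n)$: vertices that do not participate in any non-trivial reachability pair can be removed, and on the remaining instance the transitive reduction contains $\Omega(n)$ edges. Hence
\[
\frac{|TR(G)\cup H|}{\mathrm{OPT}}\;\leq\;\frac{T+\widetilde{O}(n^{2}/k^{3})}{T}\;=\;1+\widetilde{O}\!\left(\frac{n}{k^{3}}\right)\;=\;\widetilde{O}\!\left(\frac{n}{k^{3}}\right).
\]
The regime $k>n^{1/3}$ gives a ratio of $1+\widetilde{O}(\sqrt{n/k^{3}})$, which is subsumed by $\widetilde{O}(n/k^{3})$ once the additive $1$ is absorbed.

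The main obstacle is the normalization step ensuring $\mathrm{OPT}=\Omega(n)$: one must formally justify that vertices trivially reachable (or with empty reachability) can be discarded so that the transitive reduction on the reduced instance has $\Omega(n)$ edges. Given this standard reduction, the improved shortcut tradeoff of Theorem \ref{main_theorem_math_ver1} plugs directly into the AGU framework to deliver the claimed $\widetilde{O}(n/k^{3})$ approximation, improving over the $\widetilde{O}(n/k^{2})$ ratio obtained from the folklore $O((n/k)^{2})$ shortcut bound.
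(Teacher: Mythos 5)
Your proof is correct and follows essentially the same route as the paper: compute the transitive reduction, use its size as the lower bound on OPT, shortcut it using Theorem \ref{main_theorem_math_ver1} with diameter target $k$, and output the union, normalizing so that the transitive reduction has $\Omega(n)$ edges. The only superficial difference is in how you justify $T=\Omega(n)$ — you discard vertices that participate in no non-trivial reachability pair, whereas the paper appeals to weak connectivity — but both are standard normalizations achieving the same bound, and your algebra for the final ratio is spelled out a bit more carefully.
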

%\end{mdframed}
This improves upon the 10-year old state-of-the-art approximation ratio of $O(n/k^2)$ by Bhattacharyya et al. \cite{BhattacharyyaGJRW09} and Berman \cite{BermanRR10}.

\paragraph{Approximate Directed Hopsets.} We thrn turn to consider new constructions of $(\epsilon,\beta)$ hopsets for weighted $n$-vertex digraphs with integer weights in $[1,W]$.  Let $H(n,W,\beta,\epsilon)$ be the universal upper bound on the number of edges required to be added to $n$-vertex digraphs with weights $[1,W]$ in order to provide an $(1+\epsilon)$ approximate $\beta$-hop distances. We show:
%\begin{mdframed}[hidealllines=true,backgroundcolor=gray!25]

\BTHM\label{existential_hopset_thm1}[New $(\beta,\epsilon)$ Directed Hopsets]
For every $n$-vertex graph $G$ with integer weights $[1,W]$, $\epsilon \in (0,1)$ and $\beta \in \mathbb{N}_{\geq 1}$, there is an $\widetilde{O}(n^3 \cdot \poly\log(n W))$-time randomized algorithm for computing  $(\epsilon,\beta)$ hopsets whose number of edges is bounded by:
\[
H(n,W,\beta,\epsilon) = 
\begin{cases}
\widetilde{O}\left( \left(\frac{n^2}{\beta^3}\right)  \epsilon^{-2} \log (nW)\right),  &\text{for $\beta \leq n^{1/4}$;} \\
\widetilde{O}\left( \left(\frac{n}{\beta}\right)^{5/3} \epsilon^{-2} \log (n W) \right), &\text{for $\beta > n^{1/4}$}
\end{cases}
\]
\ETHM
%\end{mdframed}
Thus, for example, for $W=\poly(n)$, one can provide an $(\beta,\epsilon)$ hopsets with $\widetilde{O}(n)$ edges, and hopbound of $\beta=n^{2/5}$, improving upon the state-of-the-art hopbound of $\sqrt{n}$. 
Technically, these constructions are more involved than that of diameter shortcutting sets for the following reason. For the latter case, one can assume, w.l.o.g, that the input graph is directed acyclic, as it is sufficient to shortcut each strongly connected component, and then to consider the graph obtained by contracting each such component. This assumption is no longer valid when considering distances, and hence $(\beta,\epsilon)$ hopsets require new ideas as elaborated in Sec. \ref{sec:shortcut}. 
%\\
%\noindent \textbf{Remark.} In this paper, we put most emphasis on the achievable bounds for shortcutting sets and hopsets. That is, our primary objective is to set up new limits for the power of precomputation to speed-up transitive-closure and distance related queries in space limited settings, as motivated e.g., by \cite{Hesse03}. 
%The computation time of the improved structures are of secondary priority. Our algorithms for computing the shortcutting sets run in time $\widetilde{O}(n^{\omega})$ which is dominated by the time to compute the transitive closure of the graph. The algorithms for approximate hopsets run in $\widetilde{O}(n^{3})$ which is the state-of-the-art complexity for the combinatorial computation of APSP. Future work should consider the efficient parallel and distributed computation of these structures, and hopefully demonstrate their algorithmic applications. 

\paragraph{Open Problems.} Our work leave several intriguing open problems. The first natural objective is to further narrow the gap between the upper and lower diameter bounds. For example, for linear number of shortcut edge, the diameter upper bound is $\widetilde{O}(n^{1/3})$, and the lower bound is $\Omega(n^{1/6})$ by \cite{HuangP19}. Another intriguing setting is concerned with \emph{exact} directed hopsets, for which the $\sqrt{n}$ barrier still holds. This raises now the question (already for unweighted graphs) of whether there is a real separation between approximate and exact hopsets. 
%Lastly, it would be interesting to provide time efficient computation of our improved shortcuts and hopsets in the sequential, parallel and distributed settings.
\vspace{-7pt}\subsection{A New Shortcutting Approach.}\label{sec:shortcut}
%\vspace{-3pt}\paragraph{Diameter Shortcutting.} 
For a given $n$-vertex digraph $G$, let $G^{\star}$ denote its transitive closure. For a vertex subset $V'$, let $G^{\star}[V']=\{(u,v) \in G^{\star} ~\mid u,v \in V'\}$ be the induced transitive closure on $V'$. To provide a clean intuition into our algorithms, we restrict attention to the computation of $\widetilde{O}(n)$ shortcut edges, that provide a diameter bound of $O(n^{1/3})$. We will neglect for now the time complexity aspects and focus on the existential argument.  We start by describing the folklore algorithm that provides a diameter of $O(\sqrt{n})$. This algorithm samples a subset $V' \subset V$ of $O(\sqrt{n}\log n)$ vertices (w.h.p.), by sampling each vertex $v$ into $V'$ independently with probability $\Theta(\log n/\sqrt{n})$. The shortcut set $H$ is given by $H=G^{\star}[V']$, and thus $|H|=O(n\log^2 n)$. 
The bound on the diameter in $G \cup H$ follows by a simple hitting set argument. Consider a $u$-$v$ shortest path $P \subseteq G$ of at least $3\sqrt{n}$ edges\footnote{If no such path exists, the claim holds immediately.}. Let $P',P''$ denote the $\sqrt{n}$-length prefix (resp., suffix) of $P$. W.h.p., it holds that $P' \cap V'\neq \emptyset$ and also $P'' \cap V'\neq \emptyset$. Let $u',v' \in V'$ be some representative vertices on $P', P''$, respectively. Since $(u',v')\in H$, there is an $u$-$v$ path $Q=P[u,u']\circ (u',v') \circ P[v',v]$ in $G \cup H$ with $\Theta(\sqrt{n})$ edges. %Note that a similar procedure also provides directed exact hopsets.

Our approach is based on a random sampling of \emph{vertices} and \emph{paths}. Throughout, we assume that the graph $G$ is a directed acyclic graph (DAG). As noted in prior works, e.g., \cite{Raskhodnikova10}, this assumption can be made w.l.o.g. We start by computing a maximal collection of vertex-disjoint dipaths $\mathcal{P}=\{P_1,\ldots, P_\ell\}$ where each dipath $P_i \subseteq G^{\star}$ has length $n^{1/3}$, and thus $|\mathcal{P}|=O(n^{2/3})$. By the maximality of $\mathcal{P}$, the remaining graph $G^{\star} \setminus \bigcup_{P_i \in \mathcal{P}} V(P_i)$ contains no dipath of length $n^{1/3}$. Letting $G'=G^{\star}$, then such a collection can be obtained by iteratively computing an $n^{1/3}$-length dipath $P$ in $G' \subseteq G^{\star}$, adding $P$ to the collection $\mathcal{P}$, and deleting the vertices of $P$ from $G'$. This should be repeated as long as the diameter of $G'$ is at least $n^{1/3}$. (We provide a \emph{time efficient} procedure for computing a path collection with similar properties using the decomposition into chain and anti-chains by Grandoni et al. \cite{GrandoniILPU21}.)

The set of shortcut edges added to $H$ consists of two subsets $H_1$ and $H_2$. 
The first set $H_1$ is obtained by computing a shortcut set $H(P_i)$ for every path $P_i \in \mathcal{P}$, such that $H(P_i)\cup P_i$ has diameter at most $2$, and in addition $|H(P_i)|=O(|V(P_i)|\cdot \log |V(P_i)|)$. Such shortcuts can be computed by a known recursive procedure (e.g., see \cite{Raskhodnikova10}). Since the paths in $\mathcal{P}$ are vertex-disjoint, we have that $H_1=\bigcup_{P_i}H(P_i)$ consists of $O(n\log n)$ edges. The second set $H_2$ is obtained by connecting a random sample of vertices $V'$ to a random sample of paths $\mathcal{P}' \subset \mathcal{P}$. Specifically, $V'$ is obtained by sampling each vertex $v \in V$ into $V'$ independently with probability $\Theta(\log n/ n^{1/3})$. Therefore, w.h.p., $|V'|=O(n^{2/3}\log n)$. The set $\mathcal{P}'$ is obtained by sampling each path $P \in \mathcal{P}$ into $\mathcal{P}'$ independently with probability $\Theta(\log n/ n^{1/3})$, thus $|\mathcal{P}'|=O(n^{1/3}\log n)$. 

For each vertex $v \in V'$ and a sampled path $P_i \in \mathcal{P}'$, we add to $H_2$ a unique edge $e(v,P_i)$ that captures the $G$-reachability of $\{v\} \times V(P)$. Letting $P_i$ be an $x \leadsto y$ path, then $e(v,P_i)=(v,w)$ where $w$ is the \emph{first} vertex on $P$ (closest to $x$) satisfying that $(v,w)\in G^{\star}$. Overall, $|H_2|=O(|V'|\cdot |\mathcal{P}'|)=O(n\log^2 n)$ edges, w.h.p. We next bound the diameter of $G \cup H_1 \cup H_2$. 

Let $P$ be some $u$-$v$ \emph{shortest} path in $G \cup H_1$ of length at least $20n^{1/3}$. If no such path exists, the diameter bound holds. We have the following:
\begin{itemize}
\item{(Q1)} The path $P$ contains at most $n^{1/3}$ vertices in $V \setminus \bigcup_{P_i \in \mathcal{P}}V(P_i)$.
\item{(Q2)} $|V(P)\cap V(P_i)|\leq 3$ for every $P_i \in \mathcal{P}$.
\end{itemize}
Property (Q1) holds by the termination condition of the iterative process, and noting that any subset of $n^{1/3}$ vertices on a path $P$ induces a directed path of length $n^{1/3}$ in $G^{\star}$. 
Property (Q2) holds since $G$ is a DAG, $H_1$ contains the shortcut set $H(P_i)$ for every $P_i \in \mathcal{P}$, and by the fact that $P$ is a shortest path. Let $P'$ and $P''$ be the $10n^{1/3}$-length prefix (resp., suffix) of the path $P$. By the Chernoff bound, we have that, w.h.p., there is some $v^* \in V' \cap V(P')$. In addition, by properties (Q1) and (Q2), $P''$ contains representatives from at least\footnote{By (Q1), $P''$ contains at least $9n^{1/3}$ vertices that lie on paths in $\mathcal{P}$. By Property (Q2), there are at least $3n^{1/3}$ representatives from distinct paths.} $3n^{1/3}$ \emph{distinct} paths in $\mathcal{P}'$. By the Chernoff bound, we have that $P''$ contains a vertex $w$ that appears on some sampled path $P_i \in \mathcal{P}'$, w.h.p. Since $e(v^*,P_i) \in H_2$, we get a $u$-$v$ path $Q=P[u,v^*] \circ e(v^*, P_i) \circ P_i[x,w] \circ P[w,v]$ of length $O(n^{1/3})$ in $G \cup H_1 \cup H_2$, where $e(v^*,P_i)=(v^*,x)$. The diameter bound holds. 

\vspace{-7pt}\subsection{Preliminaries.} For an $n$-vertex digraph $G$, let $G^{\star}$ denote its transitive closure. For a vertex pair $u,v \in V(G)$ where $(u,v) \in G^{\star}$, let $\dist_G(u,v)$ be the length (measured by the number of edges) of the shortest dipath from $u$ to $v$. For $(u,v) \notin G^{\star}$, $\dist_G(u,v)=\infty$. For a subset $V' \subseteq V$, let $G[V']$ be the induced graph on $V'$. The graph diameter is denoted by $D(G)=\max_{(u,v) \in G^{\star}}\dist_{G}(u,v)$. We say that $u \leadsto_G v$ if there is a directed path from $u$ to $v$ in $G$, i.e., $(u,v) \in G^{\star}$. A \emph{shortcut} edge is an edge in $G^{\star}$.
A \emph{chain} (of size $k$) of $G$ is a subset of vertices $\{c_1, \ldots, c_k\}$ such that $[c_1,\ldots, c_k]$ is a directed path in G. An \emph{anti-chain} (of size $k$) of $G$ is
a subset of vertices $\{a_1,\ldots, a_k\}$ such that there is no edge between them in $G$. 
%For a vertex $v \in G$, let $N_{in}(v,H)=\{u ~\mid~ (u,v) \in G\}$ denote the incoming neighbors of $v$ in $G$. The set of outgoing neighbors $N_{out}(v,G)$ is defined in an analogous manner.  

For an integer weighted digraph $G=(V,E,\omega)$ where $\omega: E \to [1,W]$ is the weight function. 
Let $\dist_G(u,v)$ at the weight of the shortest path from $u$ to $v$. 
Let $\len(P)$ be the \emph{length} of path $P$, measured by the sum of its weighted edges. For any vertices $u,v \in V$, define $\dist_G^{(\beta)}(u,v)$ to be the minimum length $u$-$v$ path with at most $\beta$ edges (hops). If there is no such path, then $\dist_G^{(\beta)}(u,v)=\infty$. 

%\begin{defn}[$(\beta,\epsilon)$\textbf{-hopsets}] An $(\beta,\epsilon)$-hopset for a weighted digraph $G=(V,E,\omega)$ is a set of weighted edges $H$, such that, for any vertex pair $u,v \in V$, it holds that
%$$\dist_{G}(u,v) \leq \dist_{G \cup H}^{(\beta)}(u,v)\leq (1+\epsilon)\dist_{G}(u,v)~.$$
%The parameter $\beta$ is the \emph{hopbound}, and $|H|$ is the size of the hopset.
%\end{defn}
%
%
%\paragraph{Useful Facts.} 
We need the following lemma that computes for a directed path $P$ a shortcut set of size $O(|P| \log |P|)$ that provides a diameter of $2$. 
\BL\label{line_shortcut_lem1}[Lemma 1.1, \cite{Raskhodnikova10}]
Given a directed path $P$, one can compute in $O(|P| \log |P|)$ time, a collection of $O(|P| \log |P|)$ shortcut edges $H$ from the transitive closure of $P$ such that $\dist_{G \cup H}(u,v)\leq 2$ for every $u,v \in V(P)$. 
\EL
Given a digraph $G$, by $G^+$ we denote the result of contracting all strong components in $G$. Note that $G^+$ is acyclic directed graph (DAG) and can be computed in linear time, i.e., in $O(|E(G)|+|V(G)|)$ time. 

\noindent\textbf{Roadmap.} In Sec. \ref{sec:diam-shortcut}, we describe the constructions of new shortcutting sets, proving Theorem \ref{main_theorem_math_ver1} and Cor. \ref{cor:TC}. In Sec. \ref{sec:hopsets}, we present the algorithms for computing approximate hopsets, thus proving Theorem \ref{existential_hopset_thm1}. Finally, in Sec. \ref{pettie_appendix}, provide an extension of the lower bound construction of \cite{HuangP18} and prove Theorem \ref{pettie_lower_bound}.

\section{Improved Diameter Shortcutting Algorithms}\label{sec:diam-shortcut}

\subsection{Reducing to Diameter $D=O(n^{1/3})$.}
In this section, we provide an $O(n^{\omega})$-time randomized algorithm for computing a shortcut set of size $S(n,D)$ for every $D=O(n^{1/3})$, with high probability. 

\BTHM\label{thm:first_case_mainthm1}
For every $n$-vertex digraph $G(V,E)$ and a parameter $D\leq n^{1/3}$, one can compute, in time\footnote{$\omega$ is the optimal matrix multiplication constant.} $\widetilde{O}(n^\omega)$, an edge set $H \subseteq G^{\star}$ of cardinality $|H|=\widetilde{O}(n^2/D^3)$ such that for every $(u,v) \in G^{\star}$ it holds that $\dist_{H \cup G}(u,v) \leq D$ w.h.p. 
\ETHM
The next proposition allows us to assume that the input graph is a DAG.
\begin{prop}\label{DAG_reduction1}[Lemma 3.2, \cite{Raskhodnikova10}]
Let $H^+$ be a shortcut set for $G^+$. One can compute in $O(|E(G)|+|H|)$-time, a shortcut set $H$ for $G$ such that $|H|=O(|H^+|+n)$ and $\Diam(G \cup H)=O(\Diam(G^+ \cup H^+))$. 
\end{prop}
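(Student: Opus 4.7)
The plan is to build $H$ by combining two kinds of shortcuts: \emph{intra-SCC} shortcuts that give constant diameter inside each strongly connected component of $G$, and \emph{inter-SCC} shortcuts that lift each edge of $H^+$ back to an edge of $G^{\star}$. First I would compute the SCCs of $G$ in linear time (e.g., via Tarjan) and pick an arbitrary representative $r_C \in C$ for each SCC $C$; while forming $G^+$ I would also store, for each edge of $G^+$, a concrete witness edge $(x,y)\in E(G)$ with $x\in C_u$ and $y\in C_v$. For the intra-SCC part, I add to $H$ the edges $(r_C,v)$ and $(v,r_C)$ for every $v\in C\setminus\{r_C\}$; these lie in $G^{\star}$ by strong connectivity, contribute $O(|C|)$ per SCC and $O(n)$ overall, and bring the within-SCC diameter to at most $2$. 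For the inter-SCC part, for each $(C_u,C_v)\in H^+$ I add the edge $(r_{C_u},r_{C_v})$; since $(C_u,C_v)\in (G^+)^\star$ there is a directed $G$-path from $C_u$ to $C_v$, which combined with strong connectivity of $C_u$ and $C_v$ yields $(r_{C_u},r_{C_v})\in G^{\star}$. This gives $|H|=O(|H^+|+n)$, as required.

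For the diameter bound, consider any $(u,v)\in G^{\star}$ with SCCs $C_u,C_v$. If $C_u=C_v$, then $u\to r_{C_u}\to v$ is a path of at most $2$ hops in $G\cup H$. Otherwise, fix a shortest path $C_u=D_0\to D_1\to\cdots\to D_k=C_v$ in $G^+\cup H^+$, with $k\le\Diam(G^+\cup H^+)$, and simulate it in $G\cup H$: start with $u\to r_{C_u}$ (one hop); for each step $D_i\to D_{i+1}$, if it is an $H^+$-edge use the added shortcut $(r_{D_i},r_{D_{i+1}})$ (one hop), and if it is a $G^+$-edge with stored witness $(x,y)\in E(G)$ route $r_{D_i}\to x\to y\to r_{D_{i+1}}$ via the intra-SCC shortcuts (at most three hops); finish with $r_{C_v}\to v$ (one hop). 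The resulting path has length at most $3k+2=O(\Diam(G^+\cup H^+))$.

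I do not anticipate any serious obstacle: the only substantive point is that every $G^+$-edge traversed in the simulation must come equipped with a concrete witness in $E(G)$, which is easily arranged by recording witnesses once during the initial SCC contraction. Computing SCCs and representatives takes $O(|V(G)|+|E(G)|)$, preparing the witness map is another linear scan of $E(G)$, each edge of $H^+$ is translated to a representative pair in $O(1)$, and writing $H$ takes $O(|H|)$, yielding the claimed $O(|E(G)|+|H|)$ running time.
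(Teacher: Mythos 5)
Your proof is correct and follows the standard argument from the cited source (the paper itself does not reprove this proposition but refers to Lemma 3.2 of \cite{Raskhodnikova10}): contract SCCs, add a bidirected star of shortcuts to a representative inside each component, and lift each inter-component shortcut to an edge between representatives. One small bookkeeping note: the total hop count you obtain is $3k+2$ with $k\leq\Diam(G^+\cup H^+)$, which indeed gives the claimed $O(\Diam(G^+\cup H^+))$ bound, and the running time claim implicitly assumes $n=O(|E(G)|+|H|)$ (harmless after discarding isolated vertices).
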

%
%
%
%By Proposition \ref{DAG_reduction1}, we have:
%\begin{obs}
%It is sufficient to prove Theorem \ref{thm:first_case_mainthm1} for an $n$-vertex directed acyclic graph (DAG). 
%\end{obs}
%Note that computing the DAG of $G$ can be done in $O(|E(G)|+|V(G)|)$ time. 
From that point on, we assume that $G$ is a DAG. We use the following decomposition by Grandoni et al. \cite{GrandoniILPU21} which partitions the vertices of the graph into directed paths and independent sets. 
%
%furthermore converting the original graph to a DAG can be done in time $O(|V|+|E|)$ as this is the complexity of finding the strongly connected components of $G$.
\BD [Path and Independent Set Decomposition]
An $(a, b)$-decomposition of a directed acyclic graph (DAG) $G$ consists of a collection $\mathcal{P}=\{P_1,\ldots, P_a\}$ of $a$ chains in $G$, and a collection $\mathcal{Q}=\{Q_1,\ldots, Q_b\}$ of $b$ anti-chains of $G$ such that $\bigcup_{i=1}^a V(P_i) \cup \bigcup_{j=1}^b Q_j=V(G)$. 
\ED

\BTHM\label{thm:dag_decomposition1} [See \cite{GrandoniILPU21}, Theorem 3.2]
Let $G$ be an $n$-vertex DAG. For every $\ell \in [1,n]$, there is an $O(n^2)$-time deterministic algorithm for computing an $(\ell, 2n/\ell)$-decomposition of $G$.
\ETHM
For a given directed path $P=[u_1,\ldots, u_k]$ and a vertex $v$, let $u_i$ be the \emph{first} vertex on $P$ satisfying that $(v,u_i) \in G^{\star}$. The edge $(v,u_i)$ is denoted as \emph{the first incoming edge} from $v$ to $P$, represented by $e(v,P)=(v,u_i)$. Note that the augmented path $P \cup \{e(v,P)\}$ provides a directed path from $v$ to every vertex $u \in P$ such that $(v,u)\in G^{\star}$. For a given set $X$ and $p \in [0,1]$, let $X[p]$ be the subset obtained by sampling each element in $X$ independently with probability $p$. 

\begin{mdframed}[hidealllines=false,backgroundcolor=gray!00]
\center \textbf{Diameter Shortcutting Algorithm $\ShortcutSmallD$:}
\begin{flushleft}
\textbf{Input:} An $n$-vertex DAG $G$ and a diameter bound $D\leq n^{1/3}$. \\
\textbf{Output:} A shortcut set $H \subseteq G^{\star}$ satisfying that $\max\limits_{(u,v)\in G^{\star}}\dist_{G \cup H}(u,v)\leq D$.
\end{flushleft}

\vspace{-8pt}
\begin{enumerate}
\item Compute a $(\ell=(16n/D), 2n/\ell)$-decomposition $(\mathcal{P},\mathcal{Q})$ of $G^{\star}$ using Theorem \ref{thm:dag_decomposition1}.

\item For every $P_i \in \mathcal{P}$, let $H_i=H(P_i)$ be a shortcut set for reducing the diameter of $P_i$ to $2$ as obtained by Lemma \ref{line_shortcut_lem1}. Add $\bigcup_{P_i \in \mathcal{P}} P_i \cup H_i$ to $H$.  

\item Let $V'=V[p]$ and $\mathcal{P}'=\mathcal{P}[p]$ for $p=\Theta(\log n/D)$. 

\item For every $v \in V'$ and $P_i \in \mathcal{P}'$, add the edge $e(v,P_i)$ to the shortcut set $H$.
\end{enumerate}

\end{mdframed}
%\vspace{-4pt}
\noindent \textbf{Remark.} Observe that in contrast to the high-level description of Sec. \ref{sec:shortcut}, in our algorithm, we do not require the paths in $\mathcal{P}$ to be of length $n^{1/3}$. As will be shown in the analysis, the properties of the paths obtained by the $(\ell,2n/\ell)$ decomposition are sufficient for our purposes. 

\paragraph{Running Time.} We first show that the time complexity of the algorithm can be bounded by $\widetilde{O}(n^\omega)$. The computation of the transitive closure $G^{\star}$ can be done in $\widetilde{O}(n^\omega)$ time. 
The computation of the $(16n/D, D/8)$-decomposition by Theorem \ref{thm:dag_decomposition1} takes $O(n^2)$ time. 
The computation of the shortcut graphs for each path $P_i \in \mathcal{P}$ takes $O(|P_i|\log |P_i|)$ time, by Lemma \ref{line_shortcut_lem1}. Note that this holds as the paths in $\mathcal{P}$ are vertex-disjoint. 
Given a vertex $v$, a path $P$ and the transitive closure graph $G^{\star}$, the computation of the edge $e(v,P)$ can be done in $O(\log |P|)$ time via binary search. Therefore, computation the edges $e(v,P)$ for every $v, P \in V' \times \mathcal{P}'$ takes $O(|\mathcal{P}'| \cdot |V'|\cdot \log n)=\widetilde{O}(n^2/D^3)$.
Note that the running time is dominated by the computation of the transitive closure $G^{\star}$. 
\\ \\
\noindent \textbf{Edge Bound.} Since the paths in $\mathcal{P}$ are vertex-disjoint, $|\bigcup H(P_i)|=\widetilde{O}(n)$. In addition, $|\mathcal{P}|=O(n/D)$ and thus by Chernoff bound, w.h.p. it holds that $|\mathcal{P}'|=O(n\log n/ D^{2})$ and $|V'|=O(n\log n/D)$. Therefore, w.h.p., the last step adds $|V'|\cdot |\mathcal{P}'|= \widetilde{O}(n^2/D^3) $ edges. 
\\ \\
\noindent \textbf{Diameter bound.} 
Let $N=V(G)\setminus (\bigcup_{P_i \in \mathcal{P}}V(P_i))=\bigcup_{j} Q_j$. Therefore, $N$ is the union of $D/8$ independent sets in $\mathcal{Q}$. Let $H'=\bigcup_{P_i \in \mathcal{P}} H(P_i)$. For a fixed pair $u,v$, let $P_{u,v}$ be a shortest $u$-$v$ path in $G \cup H'$. 
\BL\label{lem:chromaticnum1}
The path $P_{u,v}$ contains at most $D/8$ vertices from $Q$.
\EL
\BPF
By Step (1) of the algorithm, each path in $G^{\star}[N]$ contains at most $D/8$ vertices. Therefore, since the vertices of any $G$-dipath induce a path in $G^{\star}$, we also have that any $G$-dipath in contains at most $D/8$ vertices of $N$. 
%By Step (1) of the algorithm, the set $N$ has chromatic number at most $D/8$ in $G^{\star}$. 
%Therefore, each path in $G^{\star}[N]$ contains at most $D/8$ vertices, and it follows that any path in $G$ contains at most $D/8$ vertices of $N$ since these vertices induce a path in $G^{\star}$.
\EPF
\noindent Let $P', P''$ be the $(D/4)$-length prefix (resp., suffix) of $P_{u,v}$.
Observe that $P''$ contains at most $3$ representative vertices from each fixed path $P_i \in \mathcal{P}$ by Step (2) of the algorithm. This holds since for each $P_i \in \mathcal{P}$, the vertices appearing on $P_{u,v}$ must appear in the same order as in $P_i$, as $G$ is a DAG. That is, since there is $2$-hop path in $H_i$ between every two vertices in $P_i$, we have that $|V(P'')\cap V(P_i)|\leq 3$, for every $P_i \in \mathcal{P}$. 

Therefore by Lemma \ref{lem:chromaticnum1}, $P''$ contains representatives from $\Omega(D)$ paths in $\mathcal{P}$, and w.h.p., $P''$ contains at least one representative vertex $y \in P_i$ for some sampled path $P_i=[a_1,\ldots, a_k]\in \mathcal{P}'$. In addition, the prefix $P'$ contains at least one sampled vertex $x \in V'$ w.h.p. Therefore the edge $e(x,P_i)$ is in $H$. 
Let $e(x,P_i) = (x,z)$ for $z \in P_i[a_1,y]$.
Therefore the augmented graph $G \cup H$ contains a $u$-$v$ path $P'_{u,v}=P_{u,v}[u, x] \circ (x,z) \circ P_i[z,y] \circ P_{u,v}[y,v]$ where
\begin{eqnarray*}
\dist_{G \cup H}(u,v)&\leq & |P'_{u,v}|=|P_{u,v}[u, x]|+1+|P_i[z,y]|+|P_{u,v}[y,v]|
\\&\leq & D/4+1 +\dist_{P_i \cup H_i}(z,y)+D/4 \leq D/4+D/4+3\leq D~.
\end{eqnarray*}
See Fig. \ref{fig:small-d} for an illustration. 
This concludes the proof of Theorem \ref{thm:first_case_mainthm1}.

\begin{figure}[h!]
\begin{center}
\includegraphics[scale=0.35]{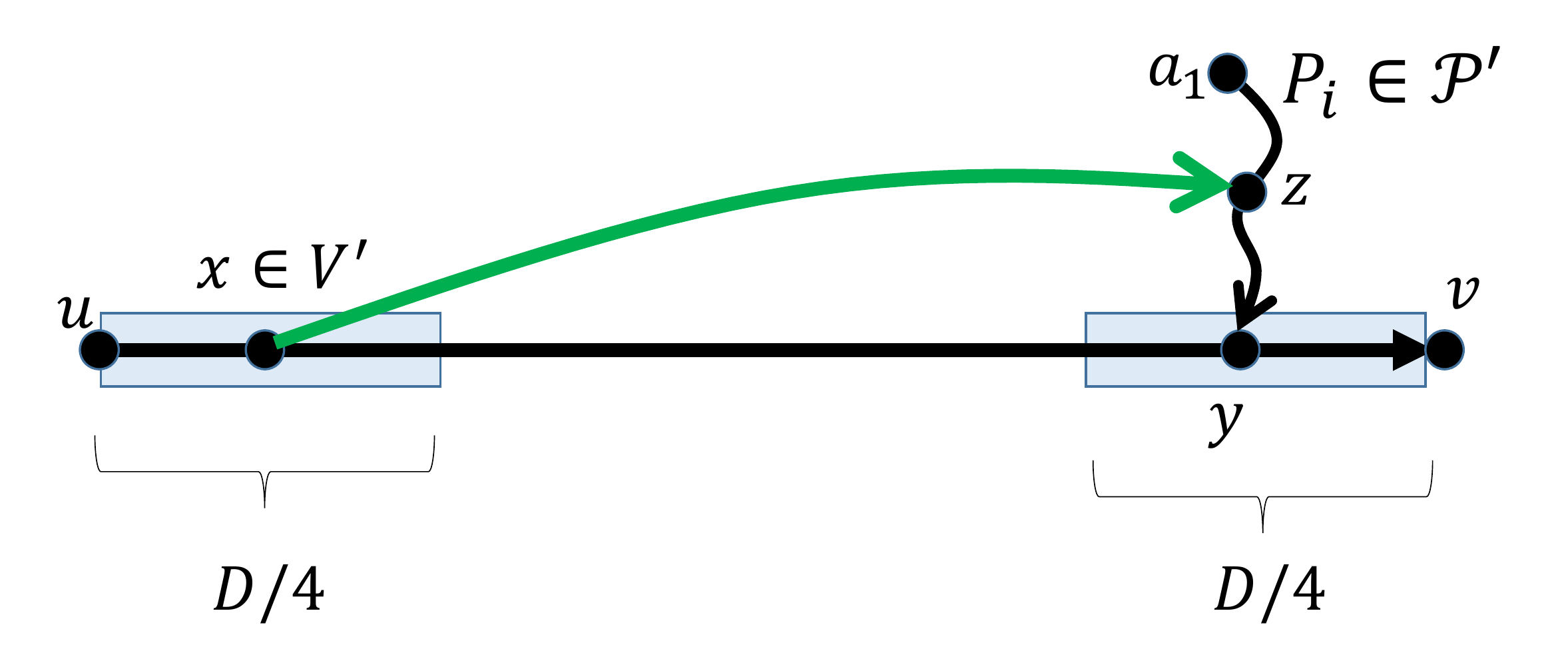}
\caption{\sf An illustration of the diameter bound obtained by Alg. $\ShortcutSmallD$.\label{fig:small-d} 
}
\end{center}
\end{figure}
\subsection{Reducing to Diameter $D=\Omega(n^{1/3})$.}
In this section we explore the complementary regime of $D=\Omega(n^{1/3})$, for which we provide a shortcut set of \emph{sublinear} cardinality. Prior to our work, sublinear shortcuts were obtained only for diameter bound of $\omega(n^{1/2})$. 
%We substantially improve on this regime as well by showing the second case of Theorem \ref{main_theorem_math_ver1}.
\BTHM\label{thm:second_case_mainthm1}
For every $n$-vertex digraph $G(V,E)$ and a parameter $D\geq n^{1/3}$, one can compute in time $\widetilde{O}(n^\omega)$, an edge set $H \subseteq G^{\star}$ of cardinality $|H|=\widetilde{O}((n/D)^{3/2})$ such that w.h.p. for every $(u,v) \in G^{\star}$ it holds that $\dist_{H \cup G}(u,v) \leq D$. 
%Furthermore, the set $H$ of can be compute w.h.p. in time $\widetilde{O}(n^\omega)$ where $\omega$ is the optimal matrix multiplication constant.
\ETHM
%
%\newpage
\begin{mdframed}[hidealllines=false,backgroundcolor=gray!00]
\center \textbf{Diameter Shortcutting Algorithm $\ShortcutLargeD$:}
\begin{flushleft}
\textbf{Input:} An $n$-vertex graph $G$ and a diameter bound $D \geq n^{1/3}$. \\
\textbf{Output:} A shortcut set $H \subseteq G^{\star}$ satisfying that $\max_{(u,v)\in G^{\star}}\dist_{G \cup H}(u,v)\leq D$.
\end{flushleft}

\vspace{-2pt}
\begin{enumerate}
\item Let $V' =V[p]$ for $p=\min\{(\sqrt{n}\log n)/D^{3/2}, 1\}$ and set $n'=|V'|$.
\item Let $G'=(V',E')$ where $E'=\{(u,v) ~\mid~ u,v \in V' \mbox{~and~} \dist_{G}(u,v)\leq D^{3/2}/\sqrt{n}\}$.
\item Output $H = \ShortcutSmallD(G',D'=(n')^{1/3} / \log n)$.
\end{enumerate}
\end{mdframed}
%
%
%Now we describe the time complexity of each step in the algorithm above.
\noindent \textbf{Running Time and Edge Bound.} 
The computation of the graph $G'$ can be done in $\widetilde{O}(n^\omega)$ time by computing a partial transitive order of $G$. In particular, using a logarithmic number of iterations
we compute $(A+I)^k$ where $k=D^{3/2}/\sqrt{n}$, $A$ is the adjacency matrix of $G$, and $I$ is the $n\times n$ identity matrix.  By the Chernoff bound, w.h.p., $n' = O((n/D)^{3/2} \log n)$. 
Applying algorithm $\ShortcutLargeD$ takes $\widetilde{O}((n')^\omega)$ time, w.h.p., by Theorem \ref{thm:first_case_mainthm1}. The number of edges added to $H$ in Step (3) is $\widetilde{O}(n')= \widetilde{O}((n/D)^{3/2})$, w.h.p. 
\\ \\
\noindent \textbf{Diameter bound.} 
For a fixed pair $u,v$, let $P_{u,v}$ be a shortest $u$-$v$ path in $G \cup H$, and assume that $|P_{u,v}| \geq D$. Let $r=D^{3/2}/\sqrt{n}$. 
Partition $P_{u,v}$ into consecutive subpaths of length $r/2$ (except the last subpath, which might be smaller than $r/2$).
Let these subpaths be $P_1,P_2,\ldots,P_q$ where $P_{u,v}=P_1 \circ P_2 \circ \ldots \circ P_q$ and
$|P_1|=|P_2|=\ldots=|P_{q-1}|=r/2$, and furthermore $|P_q|\leq r/2$.
Since every vertex in $V'$ is chosen with probability $p=\min\{(\sqrt{n}\log n)/D^{3/2}, 1\}$, by the Chernoff bound, $P_i \cap V'\neq \emptyset$ for every $1 \leq i \leq q-1$, w.h.p. Let $v_i$ be some some arbitrary sampled vertex, say $v_i$, in $V' \cap V(P_i)$.

The distance between vertex $v_i$ and $v_{i+1}$ in $G$ is at most $r$ for all $1 \leq i \leq q-1$, and hence by Step (2) of the algorithm graph $G'$ contains a path between $v_1$ and $v_{q-1}$. Thus after Step (3) of the algorithm, $G' \cup H$ contains a directed $v_1$-$v_{q-1}$ path $P'$ of length $(n')^{1/3} / \log n$. Since each $G'$-edge corresponds to a path of length at most $r$ in $G$, we get that $P'$ corresponds to a path of length $r \cdot (n')^{1/3} / \log n = O(D)$ in $G \cup H$.  Now the path from $v_1$ to $v_{q-1}$ of length $O(D)$ in $G \cup H$ can be extended to a path of length $O(D)$ between $u$ and $v$ in $G \cup H$ (as there is a path of length at most $r/2$ between $u$ and $v_1$ and a path of length at most $r$ between $u_{q-1}$ and $v$).
%Notice that we got diameter $\widetilde{O}(D)$, to get diameter smaller than $D$ we could simply set in Step 3 of the algorithm the shortcut set $H$ to be $H = \ShortcutLargeD(G',(n')^{1/3} / %\rm{polylog}(n))$.
This concludes the proof of Theorem \ref{thm:second_case_mainthm1}. See Fig. \ref{fig:large-d} for an illustration.

\begin{figure}[h!]
\begin{center}
\includegraphics[scale=0.35]{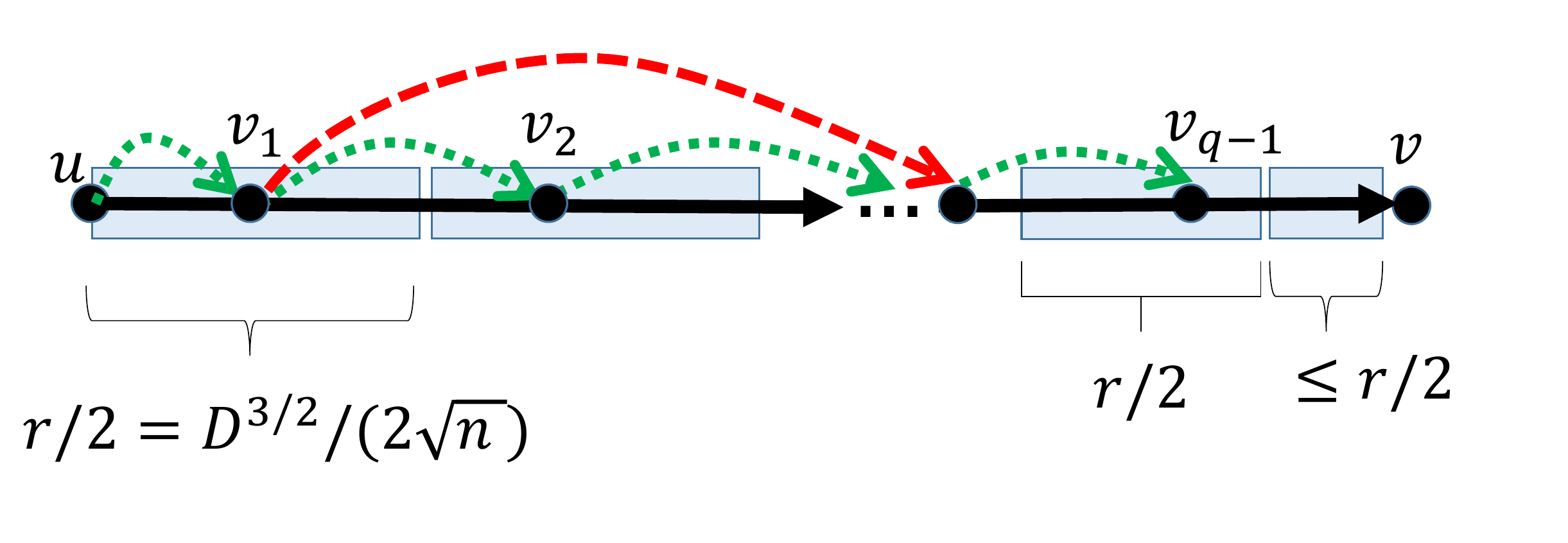}
\caption{\sf An illustration of the diameter bound obtained by Alg. $\ShortcutLargeD$. The green dashed edges correspond to edges of $G'$. The red edge is a shortcut edge in $H$. Since an edge in $G'$ translates into a path of length at most $r$ in $G$, the diameter bound in $G \cup H$ is bounded by $O((n')^{1/3} / \log n \cdot r)$.  \label{fig:large-d} 
}
\end{center}
\end{figure}
We are now ready to complete the proof of Cor. \ref{cor:TC} and provide an improved approximation algorithm for TC-spanners. 
\begin{proof}[Proof of Cor. \ref{cor:TC}]
A transitive reduction of $G$ is a directed graph $G'$ such that $G$ and $G'$ have the same transitive closure, and in addition $G'$ is the smallest graph (by number of edges) that satisfies this condition. Clearly, a $k$-TC spanner must contain at least $O(|E(G')|)$ edges. Our output graph $\widetilde{H}$ is obtained by applying the shortcutting construction algorithm of Theorem \ref{main_theorem_math_ver1} on the graph $G'$ and diameter bound $D=k$. 
%We can assume w.l.o.g. that $k\leq n^{1/3}$. This is because for $k\geq n^{1/3}$, we compute a shortcut set of $\widetilde{O}(n)$ edges, and since one can assume that $G'$ is weakly connected, this already provides $\widetilde{O}(1)$ approximation. 
This results in a shortcut set $H \subseteq G^{\star}$ satisfying that $G' \cup H$ has diameter at most $k$, thus $H \subseteq G^{\star}$ is a $k$-TC spanner for $G$. We have that $|H|=\widetilde{O}(\max\{n^2 / k^3, n\})$ which sets the desired approximation of $O(n / k^3)$ since we can assume, w.l.o.g. that $G'$ is weakly connected, and thus has at least $n-1$ edges. A similar usage of the transitive reduction has been also employed in \cite{BhattacharyyaGJRW12}. The key contribution is in the application of Theorem \ref{main_theorem_math_ver1} on that graph. 
\end{proof}

\section{$(1+\epsilon)$-Approximate Directed Hopsets}\label{sec:hopsets}
%\textbf{MP: In what follows I assume that $G$ is a DAG, need to check how to handle the general case.}
Let $G^{\star}_{\omega}$ be the weighted transitive closure of a weighted digraph $G=(V,E,\omega)$, where each edge $e=(u,v) \in G^{\star}_{\omega}$ is weighted $\dist_G(u,v)$. The weighted transitive closure graph can be computed by applying an APSP algorithm in $G$. %Let $W$ be the largest edge weight in $G$.
We first provide a hopset construction for hopbound at most $\beta \leq n^{1/4}$. The cardinality of the shortcut edges obtained for this regime should be compared with Theorem \ref{thm:first_case_mainthm1}. The main difference is that in the weighted hopset construction the diameter threshold is $n^{1/4}$, and in our directed shortcut construction the threshold is improved to $n^{1/3}$. 

%Recall that for a weighted graph $G$ the all pair shortest path graph $G^{\star}_{\omega}$ has an edge of weight $w$ between two vertices $u$ and $v$ if and only if $w$ is the length of the shortest path between $u$ and $v$ in $G$.
%We start by proving the following theorem.

\BTHM\label{thm:main_hopset_bound_small_D}
For every $n$-vertex digraph $G=(V,E,\omega)$ with integer weights in $[1,W]$, hopbound $\beta=O(n^{1/4})$ and $\epsilon \in (0,1)$, one can compute an $(\epsilon,\beta)$-hopset $H$ of cardinality $|H|=\widetilde{O}(n^2 \cdot \log (n W)/(\epsilon^2 \beta^3))$, where $W$ is the ratio between the maximum edge weight and the minimum edge weight in $G$. The computation time is $O(n^3 \cdot \poly\log(n W))$. 
%\begin{equation} \label{eq:shortcut-eq}
%d_{G}(u,v)\leq d_{H \cup G}^{(D)}(u,v)\leq (1+\epsilon)d_{G}(u,v)~.
%\end{equation}
\ETHM
%We will use the following hopset constructions from \cite{CaoFR20}.
%We will need the following folklore result.
%\BL \label{fc:exact-hopset}
%For any $n$-vertex weighted directed graph $G$, one can compute exact hopset $G \cup H$ with $|H|=\widetilde{O}(n)$ edges and hopbound of $O(\sqrt{n})$.
%\EL
Throughout, all edges $(u,v)$ added to the hopset are weighted by $\dist_G(u,v)$, namely, the edge weight in $G^{\star}_{\omega}$.  In contrast to the construction of the (unweighted) diameter shortcut sets, here one \emph{cannot} assume that $G$ is a directed acyclic graph. 
%This is because it is no longer clear how to provide an approximate hopset for each strongly connected component $C$ using $O(|C|)$ edges. 
%(\textbf{MP: requires explanation. Intuitively, we should only care for $d_{G}(v, u_{j'})< D/\epsilon$ and for paths of length $D/\epsilon$.}).
%
%
%%In addition, partition $P[u_i,u_k]$ into consecutive segments of length $\lceil D/\epsilon \rceil$, and add a weighted directed edge from $v$ to the head of each segment. Note that $|E(v,P)|=O(1/\epsilon)$.
%We first provide a meta-algorithm $\HopsetSmallHop$ for computing the hop edges $H$. In Appendix \ref{app:hopset-efficient}, we show that this algorithm can be implemented in $\widetilde{O}(n^{3})$. 
%Throughout, we assume that 
%the edge weights are positive integers in $[1,W]$. This assumption is mainly important for the efficient implementation, but the correctness of the construction works for any graph with positive weights in $[1,W]$.

\paragraph{Nice Path Collection.}
Our algorithm is based on a computation of an \emph{ordered} path collection $\mathcal{Q}=\{P'_1,\ldots, P'_k\}$ with several useful properties\footnote{Some of these properties are crucial for the efficiency of computation, and some are important for the stretch argument.}. A collection of paths $\mathcal{Q}$ is \emph{nice} if it satisfies the following properties: 
\begin{itemize}
\item{(N1)} $P'_i \subseteq G^{\star}_{\omega}$, and all paths in $\mathcal{Q}$ are vertex-disjoint.

\item{(N2)} Every $P'_i \in \mathcal{Q}$ has $\beta/12$ hops.

\item{(N3)} $\len(P'_i)=\dist_G(u_i,v_i)$ where $u_i, v_i$ are the endpoints of $P'_i$, for every $P'_i \in \mathcal{Q}$.

\item{(N4)} $\len(P'_1) \leq \len(P'_2) \leq \ldots \leq \len(P'_k)$.

\item{(N5)} For every $i \in \{1,\ldots, k\}$, it holds that $\len(P'_i)$ is the smallest among all other $(\beta/12)$-hop shortest paths in $G^{\star}_{\omega} \setminus \bigcup_{j=1}^{i-1} V(P'_j)$.

\item{(N6)} For every $u$-$v$ path $P \subseteq G^{\star}_{\omega} \setminus \bigcup_{P'_i \in \mathcal{Q}}V(P'_i)$ with at least $\beta/12$ hops, it holds that $\len(P)> \dist_G(u,v)$. In other words, the set $\mathcal{Q}$ is maximal w.r.t properties (N1-N5).
%
%The chromatic number of $G^{\star}_{\omega} \setminus \bigcup_{P'_i \in \mathcal{Q}}V(P'_i)$ is bounded by $\beta/12-1$.
\end{itemize}

\begin{obs}
The set $\mathcal{Q}$ is non-empty iff the graph $G$ contains at least one $u$-$v$ shortest path with at least $\beta$ hops.
\end{obs}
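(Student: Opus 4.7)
The plan is to prove the two implications separately using the defining properties (N1)--(N6) of a nice path collection.

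For the ($\Leftarrow$) direction, I would assume $G$ contains a shortest $u$-$v$ path $\pi$ of at least $\beta$ hops and select $\beta/12+1$ vertices $w_0, w_1, \ldots, w_{\beta/12}$ along $\pi$ in order. Because every sub-path of a shortest path in $G$ is itself a shortest path, each edge $(w_{j-1},w_j)$ in $G^{\star}_{\omega}$ has weight $\dist_G(w_{j-1},w_j)$, and the sequence $(w_j)$ defines a $(\beta/12)$-hop path $Q$ in $G^{\star}_{\omega}$ whose total length telescopes to $\dist_G(w_0,w_{\beta/12})$. If $\mathcal{Q}$ were empty, property (N6) would apply to $Q$ inside all of $G^{\star}_{\omega}$ and force $\len(Q) > \dist_G(w_0,w_{\beta/12})$, a direct contradiction. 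Hence $\mathcal{Q}$ must be non-empty.

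For the ($\Rightarrow$) direction, I would take any $P'_1 = (x_0,\ldots, x_{\beta/12}) \in \mathcal{Q}$, replace each $G^{\star}_{\omega}$-edge $(x_{j-1},x_j)$ by a realizing shortest $G$-path $\sigma_j$, and concatenate. By (N3) this yields a walk in $G$ from $x_0$ to $x_{\beta/12}$ of total weight $\dist_G(x_0,x_{\beta/12})$, from which a shortest $G$-path $\Pi$ can be extracted after cycle removal. A hop-count accounting then shows that $\Pi$ has at least $\beta$ hops, by leveraging the granularity constant in (N2) together with the minimality condition (N5) to lower bound the number of $G$-edges on each $\sigma_j$.

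The main obstacle is the final hop-count step in the ($\Rightarrow$) direction: the naive expansion of a $(\beta/12)$-hop shortcut path only certifies $\geq \beta/12$ underlying $G$-hops, so bridging the factor-$12$ gap to $\geq \beta$ requires a short auxiliary lemma showing that each realizing sub-path $\sigma_j$ contains at least $12$ edges of $G$. I would establish this by arguing that if some $\sigma_j$ had fewer hops then $P'_1$ could have been replaced by a strictly lower-length $(\beta/12)$-hop path in $G^{\star}_{\omega}$ supported on the same vertex budget, contradicting how $\mathcal{Q}$ was greedily built under (N4)--(N5). With this lemma in hand, both directions close and the observation follows.
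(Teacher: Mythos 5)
Your $(\Leftarrow)$ direction is sound and is the natural proof: pick $\beta/12+1$ ordered vertices of a long shortest $G$-path, telescope to obtain a $(\beta/12)$-hop shortest path in $G^{\star}_{\omega}$, and note that if $\mathcal{Q}$ were empty, (N6) (with $\bigcup_i V(P'_i)=\emptyset$) would be violated by this path.

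Your $(\Rightarrow)$ direction, however, rests on an auxiliary lemma that is false, and no patch along these lines can work because the "only if" direction of the observation is itself false as literally stated. The minimality rule (N5) constrains only $\len(P'_i)$, which is determined entirely by the endpoints of $P'_i$ (it equals $\dist_G$ of those endpoints), so it gives no lower bound whatsoever on the number of $G$-edges carried by each expansion $\sigma_j$. Concretely, let $G$ consist of a single directed path $v_1 \to v_2 \to \cdots \to v_{\beta/12+1}$ with unit weights, plus enough isolated vertices to satisfy $\beta \leq n^{1/4}$. Then $\mathcal{Q}=\{[v_1,\ldots,v_{\beta/12+1}]\}$ is a valid nice path collection (every $(N1)$–$(N6)$ holds, and nothing remains after deleting its vertices), so $\mathcal{Q}$ is non-empty; yet every shortest path in $G$ has at most $\beta/12$ hops, and every $\sigma_j$ is a single $G$-edge. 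The statement should read "$\beta/12$ hops" rather than "$\beta$ hops"; with that correction your expansion argument for $(\Rightarrow)$ closes immediately (each $G^{\star}_{\omega}$-edge of $P'_1$ contributes at least one $G$-edge, and positive weights forbid cycles), and the per-$\sigma_j$ lower bound is unnecessary. The paper states this observation without proof, so there is no paper argument to compare against; only the $(\Leftarrow)$ direction is actually used in the sequel.
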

Note that a-priori, it might not be possible to efficiently find a path with at least $\beta'$ number of hops. The nice set $\mathcal{Q}$ can be computed efficiently, however, due to our extra requirement for these paths to be also shortest path in the graph $G^{\star}_{\omega}$. The following theorem is proven in the full version.
%Appendix \ref{app:hopset-efficient}.
\begin{thm}\label{lem:setQ}
Given an $n$-vertex weighted graph $G=(V,E, \omega)$ with integer weights $[1,W]$, one can compute a nice path collection $\mathcal{Q}$ that satisfies properties (N1-N6) in time $O(n^3 \cdot \poly\log(n W))$. 
%The bit complexity time is bounded by $O(n^3 \cdot \poly\log(n W))$. 
\end{thm}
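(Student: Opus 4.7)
The plan is a greedy construction: starting with $R=V$ and $\mathcal{Q}=\emptyset$, repeatedly extract a smallest-length $(\beta/12)$-hop shortest path of $G^{\star}_{\omega}$ whose vertices lie entirely in $R$, append it to $\mathcal{Q}$, and delete its vertices from $R$, halting when no such path remains. Preprocessing consists of an APSP computation on $G$ to obtain the weighted transitive closure matrix $M[u,v]:=\dist_G(u,v)$ in $O(n^3\cdot\poly\log(nW))$ time, followed by computing the min-plus matrix powers $M^{(2^j)}$ for $j=0,\ldots,\lceil\log(\beta/12)\rceil$ via repeated squaring in the metric $M$, at a cost of $O(n^3\log\beta)$. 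By the triangle inequality in $G^{\star}_{\omega}$, a pair $(u,v)$ admits a $(\beta/12)$-hop realisation of total weight exactly $\dist_G(u,v)$ iff $M^{(\beta/12)}[u,v]=M[u,v]$, and so a single sweep over this matrix characterises all candidate pairs.

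The greedy loop then enumerates candidate pairs in non-decreasing order of $M[u,v]$. For each candidate $(u,v)$ with $u,v\in R$, it searches for a witness sequence $u=w_0,w_1,\ldots,w_{\beta/12}=v$ inside $R$ with consecutive $G$-distances telescoping to $M[u,v]$, via recursive midpoint search on the precomputed $M^{(2^j)}$: at each level it looks for an $R$-midpoint $w$ with $M^{(2^{j-1})}[\text{left},w]+M^{(2^{j-1})}[w,\text{right}]=M[\text{left},\text{right}]$, recurses on both halves, and backtracks on failure. On success the resulting path $P'$ is added to $\mathcal{Q}$ and its vertices are removed from $R$; on global failure the pair is skipped permanently. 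Properties (N1)--(N3) follow directly from construction: vertex-disjointness is enforced by deletion, the hop count is exactly $\beta/12$ by the recursion depth, and the telescoping condition forces $\len(P'_i)=\dist_G(u_i,v_i)$. Properties (N4)--(N5) follow from the sorted scan combined with the monotonicity observation that shrinking $R$ cannot turn an infeasible candidate into a feasible one, so the $i$th extracted path indeed minimises $\len$ over all $(\beta/12)$-hop shortest paths in $G^{\star}_{\omega}[R_i]$. Property (N6) is the halting condition.

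The main obstacle is proving the $O(n^3\cdot\poly\log(nW))$ runtime. The preprocessing is already within budget, so the delicate part is the total cost of the greedy loop: a naive implementation of the midpoint search may backtrack through many failing midpoints before succeeding or certifying infeasibility, and each successful extraction changes $R$ and so can invalidate midpoint information that was just computed. My approach would be to maintain the restricted power matrices $M^{(2^j)}_R$ across deletions and amortise: there are only $O(n/\beta)$ successful extractions, hence only $O(n)$ deletions in total, so by updating $M^{(2^j)}_R$ decrementally one deleted vertex at a time and charging each failed midpoint probe to the deletion that caused it, one can ensure that no candidate pair is probed more than $\tilde O(1)$ times per deletion. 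Verifying that this charging scheme, combined with the decremental update cost, fits inside $O(n^3\cdot\poly\log(nW))$ is the subtle step that the full-version proof must execute in detail; this is the only place where I expect real technical difficulty, since everything else follows transparently from the construction.
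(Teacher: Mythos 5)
Your high-level plan (greedily extract a minimum-length $(\beta/12)$-hop shortest path, delete its vertices, repeat until none remains) is the same as the paper's, and your verification of (N1)--(N6) from that plan is sound. The genuine gap is precisely where you flag it: the running time. To drive the midpoint search without exponential backtracking you need, at every iteration, \emph{exact} $R$-restricted hop-bounded min-plus powers $M^{(2^j)}_R$ (intermediate vertices constrained to the current survivor set $R$), and you propose to maintain these decrementally under the $O(n)$ vertex deletions. But there is no off-the-shelf decremental data structure for hop-bounded APSP, the $\log\beta$ matrices do not compose under vertex deletion the way unbounded shortest paths do, and recomputing from scratch after each deletion costs $O(n^3\log\beta)$ per deletion, i.e.\ $O(n^4\log\beta)$ overall. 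The amortization you sketch (``charge each failed midpoint probe to the deletion that caused it'') bounds the number of probes, not the cost of rebuilding the matrices those probes rely on; without exact $M^{(2^j)}_R$ the probes themselves are not certified, and the backtracking bound is unproved. So the proposal, as written, does not reach $O(n^3\,\mathrm{poly}\log(nW))$, and you correctly identify this as the only place where a new idea is needed --- but you do not supply one.

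The paper's proof supplies exactly that missing idea and thereby avoids hop-bounded matrices altogether. It builds an auxiliary digraph $\widetilde G$ on $2n$ vertices by splitting each $v_i$ into an in-copy and an out-copy joined by an edge of infinitesimal magnitude ($\pm 1/n^2$, below any rounding threshold), and routing every edge of $G^\star_\omega$ through these gadgets. A single APSP computation on $\widetilde G$ then encodes, in each matrix entry $A_{v^{\inn}_i, v^{\outt}_j}$, \emph{both} $\dist_G(v_i,v_j)$ (the integer part) \emph{and} the maximum hop count over all shortest $v_i$--$v_j$ paths in $G^\star_\omega$ (the fractional part). Picking the next path to extract is then a linear scan for the pair with max hop count exactly $\beta/12$ and minimum distance, path reconstruction is a greedy one-step walk against the $B$ and $B'$ matrices (no recursive midpoint search at all), and --- crucially --- maintaining everything across vertex deletions is just decremental APSP on $\widetilde G$, which Thorup's algorithm handles in $\widetilde O(n^2)$ amortized per deletion, $\widetilde O(n^3)$ total. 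In short: where you need an unproven decremental structure for $\log\beta$ hop-bounded matrices, the paper replaces the whole mechanism with one weight-perturbation gadget plus a single invocation of an existing decremental APSP black box. That reduction is the heart of the proof and is absent from your proposal.
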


\paragraph{The Hopset Algorithm.} Equipped with the computation of the nice path collection $\mathcal{Q}$, we next turn to provide the $(\epsilon,\beta)$ hopset construction for $\beta=O(n^{1/4})$. Upon computing $\mathcal{Q}$, the algorithm first adds to the output set $H$, the hopset edges $H_i=H(P'_i)$ for each path $P'_i \in \mathcal{Q}$. In contrast to the diameter shortcutting setting, here $H(P'_i)=G^{\star}_{\omega}[V(P'_i)]$ will have a quadratic size of $O(|V(P'_i)|^2)$ edges, rather than linear size in $V(P'_i)$, as obtained by Lemma \ref{line_shortcut_lem1}. The reason for this has to do with the fact that $G$ is no longer assumed to be a DAG. 

Next, the algorithm computes a more refined path collection $\mathcal{P}$ by partitioning the \emph{vertices} of each path $P'_i \in \mathcal{Q}$ into $O(1/\epsilon)$ subpaths $P'_{i,1}, \ldots, P'_{i,k_i}$, each of length at most $\epsilon \cdot \len(P'_i)$. Importantly, these subpaths do not cover all edges in $P'_i$. In addition, some of the $P'_{i,j}$ paths might consist of a singleton vertex. The important properties of these subpaths are that (i) they cover all vertices in $P'_i$, i.e., $\bigcup_{j} V(P'_{i,j})=V(P'_i)$ and that (ii) $\len(P'_{i,j})\leq \epsilon \cdot \len(P'_i)$. Altogether, $|\mathcal{P}|=O(1/\epsilon)\cdot |\mathcal{Q}|$.

The computation of the $P'_{i,j}$ subpaths (to be added to $\mathcal{P}$) is done by applying a simple iterative procedure on $P'_i=[a_1,\ldots,a_\ell] \in \mathcal{Q}$. Let $a_{j_1}$ be the largest index on $P'_i$ satisfying that $\len(P'_i[a_1,a_{j_1}])\leq \epsilon \cdot \len(P'_i)$. Set $P'_{i,1}=P'_i[a_1,a_{j_1}]$.  Note that it might be the case that $a_{j_1}=a_1$, and in such a case, $P'_{i,1}=\{a_1\}$. The algorithm then proceeds on the remaining path $P'_i[a_{j_1+1}, a_k]$ to define $P'_{i,2}$. Note that the edge $(a_{j_1}, a_{j_1+1})$ will not appear in any of the output subpaths. 
This continues for $O(1/\epsilon)$ steps. Overall, we get a partitioning of the $P'_i$ vertices into $O(1/\epsilon)$ subpaths and singleton vertices (we sightly override notation and refer to these singleton vertices as paths as well).  

The algorithm then proceeds with the refined path collection $\mathcal{P}$.
In a similar manner to the construction of the shortcutting set (of Alg. $\ShortcutSmallD$), it samples a subset of paths $\mathcal{P}' \subseteq \mathcal{P}$, and a subset of vertices $V' \subseteq V$. It then adds a collection of $ O(\log (n W) / \epsilon)$ weighted edges between every sampled vertex $v \in V'$ and every sampled path $P=[u_1,\ldots, u_k] \in \mathcal{P}'$. Whereas in the diameter shortcutting algorithm, it was sufficient to add a single edge $e(v,P)$, here the algorithm adds a \emph{subset} $E(v,P)$ of $ O(\log (n W) / \epsilon)$ weighted edges defined as follows. Let $u_i$ be the first vertex on $P$ (of minimal index $i$) satisfying that $v \leadsto_G u_i$. Then include the weighted edge $(v,u_i)$ in $E(v,P)$. The additional edges from $v$ to $V(P)$ are added to $E(v,P)$ in a sequential manner. For $j\geq i$, let $u_j$ be the last vertex on $P$ such that $(v,u_j)$ was added to $E(v,P)$. Then, the set 
$E(v,P)$ contains $(v,u_{j'})$ for the first vertex in $P[u_{j}, u_{k}]$ satisfying that
$(1+\epsilon) \dist_{G}(v, u_{j'})< \dist_{G}(v, u_j)~.$
Note that 
\begin{equation}\label{eq:epsilon_reduction1}
|E(v,P)|=O(\log_{1+\epsilon} (n W)) = O(\log (n W) / \epsilon).
\end{equation}
The construction algorithm of the $(\epsilon,\beta)$ hopset is formalized next. 

\begin{mdframed}[hidealllines=false,backgroundcolor=gray!00]
\center \textbf{Algorithm $\HopsetSmallHop$:}
\begin{flushleft}
\textbf{Input:} An $n$-vertex weighted $G=(V,E,\omega)$ with weights in $[1,W]$, hopbound $\beta\leq n^{1/4}$ and an $0<\epsilon$. \\
\textbf{Output:} An $(\epsilon,\beta)$ hopset $H \subseteq G^{\star}_{\omega}$.
\end{flushleft}

\vspace{-10pt}
\begin{enumerate}
\item Compute the weighted transitive closure $G^{\star}_\omega$, and a nice path collection $\mathcal{Q}$ by applying Theorem \ref{lem:setQ}. 

%Set $G' \gets G^{\star}_{\omega}$, $\mathcal{Q}=\emptyset$, $i=1$. Repeat the following procedure as long as $G'$ contains some path of at least $\beta/12$ hops. In the $i^{th}$ repetition, let $\mathcal{S}$ be the collection of all $V \times V$ shortest paths with exactly $\beta/12$-hops in the current weighted graph $G'\subseteq  G^{\star}_{\omega}$. Let $P'_i$ be the path of shortest weighted length among all paths in $\mathcal{S}$. 
%That is, 
%$$\len(P'_i)=\min\{\len(P') ~\mid~ P' \in \mathcal{S}\}.$$ 
%Add $P'_i$ to $\mathcal{Q}$, and omit the vertices of $V(P'_i)$ from $G'$ by setting $G' \gets G' \setminus V(P'_i)$. 
%Denote the resulting path collection by $\mathcal{Q} = \{P'_1,P'_2,\ldots,P'_k\}$.
% and let $V''=\bigcup\limits_1^k V(P'_i)$.
%
%Pick a minimal $\beta/12$-hop shortest length path $P'$ from $G^{\star}_{\omega}$, remove the vertices of the chosen path from $G^{\star}_{\omega}$ and continue iteratively on the resulting graph while such a path exists. 
\item For every $P'_i \in \mathcal{Q}$, add the edges $H(P'_i)=G^{\star}_\omega[V(P'_i)]$ to $H$.
   
\item For every $P'_i \in \mathcal{Q}$, the vertices of $P'_i$ are partitioned into $k_i = O(1/\epsilon)$ subpaths $P'_{i,1},\ldots, P'_{i,k_i}$ each of length at most $\epsilon \cdot  \ell_i$ where $\ell_i=\len(P'_i)$. 
Let $\mathcal{P}=\{P'_{i,1},\ldots, P'_{i,k_i} ~\mid~ P'_i \in \mathcal{Q}\}$ be the collection of $O(k/\epsilon)$ paths. 
%
%we take path $P'_i$ which is of length $l_i$ and partition it into $t = O(\epsilon^{-1})$ subpaths, each of length at most $\epsilon \cdot  l_i$.
%This can be done iteratively by traversing the vertices of $P'_i$. Suppose that the vertices of $P'_i$ are $v_1,v_2,\ldots,v_{D/12}$, we choose the first $i$ such that the path
%$v_1,v_2,\ldots,v_{i+1}$ is of length greater than $\epsilon \cdot l_i$ (or $i=D/12$), then the first subpath will be $v_1,v_2,\ldots,v_{i}$, and then we continue iteratively on $v_{i+1},v_{i+2},\ldots,v_{D/12}$.
%Denote the resulting family of $r = O(t \cdot k) = O\left( \frac{n}{\epsilon D} \right)$ paths by $\mathcal{P} = \{P_1,P_2,\ldots,P_r\}$.
%\item Let $V' \subset V$ be a random sample of $O(n\log n/ D)$ vertices, and let $\mathcal{Q} \subset \mathcal{P}$ be a random sample of $O(|\mathcal{P}|\log n/D^2)$ paths.
\item Let $V' \subset V$ be a random sample of $O(n\log n/\beta)$ vertices, and let $\mathcal{P}' \subset \mathcal{P}$ be a random sample of $O(|\mathcal{P}|\log n/\beta)$ paths\footnote{Obtained by sampling each $P \in \mathcal{P}$ into $\mathcal{P}'$ with probability $\Theta(\log n/\beta)$.}.
 %where each path is chosen with probability  $O\left(\frac{\log n}{D}\right)$.
\item For every $v \in V'$ and $P_{j} \in \mathcal{P'}$, add the weighted edges $E(v,P_j)$ to the hopset $H$ .
\end{enumerate}
\end{mdframed}

%In Appendix \ref{app:hopset-efficient}, we show:
%\begin{lem}\label{lem:efficient-hopset}
%Algorithm $\HopsetSmallHop$ can be implemented in bit complexity $O(n^3 \cdot \text{polylog}(n W))$ for integer weights in the range $[1,W]$.
%\end{lem}
%We next turn to analyze the size of the hopset.
 %%
\paragraph{Running time.} 
By Theorem \ref{lem:setQ}, the computation of the nice path collection $\mathcal{Q}$ runs in $O(n^3 \poly\log(n W))$ time.  We next analyze the running time of the remaining steps of Algorithm $\HopsetSmallHop$. Step (1) runs in time $O(n^3 \poly\log(n W))$. Step (2) runs in time $O(n^2)$, Steps (3,4) run in $O(n)$ time. Finally,  Step (5) runs in time $O(n^2)$ (as we can use the fact that  $G^{\star}_{\omega}$ is already computed during Step 1).
We conclude that the time complexity of Algorithm  $\HopsetSmallHop$ is $O(n^3 \cdot \poly\log(n W))$.
%Lemma \ref{lem:efficient-hopset} follows. 

%We will show that Algorithm $\HopsetSmallHop$ runs in polynomial time for integer weights in the range $[1,W]$ (note that this assumption is required only for the running time,
%as an existential result Algorithm $\HopsetSmallHop$ can work with real weights in the range  $[1,W]$).
%

\paragraph{Size Bound.} %Since the paths in $\mathcal{P}$ are vertex-disjoint, $|\bigcup H(P_i)|=\widetilde{O}(n)$. In addition,
We start by bounding the number of edges in $H$. By properties (N1) and (N2), $|\mathcal{Q}|=O(n/\beta)$. 
Step (2), adds $O(\beta^2)$ edges to $H$ for each path $P'_i \in \mathcal{Q}$. Therefore, it adds a total of 
$O(n \beta) = O(n^2/\beta^3)$ edges to $H$, where the last equality follows from the fact that $\beta = O(n^{1/4})$. To bound the number of edges in Step (3), observe that $|\mathcal{P}|= O(n/(\epsilon \cdot \beta))$ and therefore $|\mathcal{P}'|=O(n\log n/ (\epsilon \cdot\beta^{2}))$ w.h.p. Since $|V'|=O(n \log n /\beta)$, the total number of edges added in Step (3) to $H$ is bounded by
$$|V'|\cdot |\mathcal{P}'|\cdot O(\log( n W)/\epsilon)=\widetilde{O}(n^2 \cdot \log(n W)/(\epsilon^2 \beta^3))~.$$

\paragraph{Stretch and Hopbound.} Let $M = \bigcup_{P'_i \in \mathcal{Q}}V(P'_i)$ be the vertices lying on the paths of $\mathcal{Q}$ (and thus also on the paths of $\mathcal{P}$), and let $R=V(G)\setminus M$ be the remaining vertices. Let $H'=\bigcup_{P'_i \in \mathcal{Q}} H(P'_i)$ be the set of edges in $H$ added in
Step (2) of the algorithm, where $H(P'_i)=G^{\star}_{\omega}[V(P'_i)]$. For a fixed pair $u,v  \in V$, let $P_{u,v}$ be a shortest $u$-$v$ path in $G \cup H'$ with the \emph{minimal} number of hops among all other $u$-$v$ shortest paths. Assume that $P_{u,v}$ has at least $\beta$ edges (hops), as otherwise the claim follows immediately. 
%
%We can assume that this number of hops is greater than $\beta$ for otherwise we are done.
% thus $len(P_{u,v})=d_G(u,v)$. 
We need the following simple observation
\BL\label{lem:reminder_of_the_graph_bound1}
$P_{u,v}$ contains at most $\beta/12$ vertices from $R$.
\EL
\BPF
If $P_{u,v}$ contains a vertex set $S \subseteq R$ such that $|S| = \beta/12 + 1$ then this set induces a $(\beta/12)$-hop shortest path in $G^{\star}_\omega$, leading to a contradiction with property (N6). That is, by the maximality of the set $\mathcal{Q}$, any remaining path of $\beta/12$ hops is not a shortest path in $G^{\star}_{\omega}$. 
%
%
%and hence would have been added to $\mathcal{Q}$ in Step (1) of the algorithm, in contradiction of the fact that $S \subseteq R$ and hence was not chosen in Step (1).
\EPF
Next, let $S$ and $S'$ be the $(\beta/3)$-hop prefix and suffix of $P_{u,v}$, respectively.
\BL\label{lem:unique_elements_bound1}
$S'$ contains at most $2$ representative vertices from each path $P'_i \in \mathcal{P}'$. I.e., $|V(S) \cap V(P'_i)|\leq 2$ for every $P'_i \in \mathcal{P}'$.
\EL
\BPF
As $H'=\bigcup_{P'_i \in \mathcal{P}'} H(P'_i)$ and for all $i$ we have that $ H(P'_i)$ contains all the edges of the subgraph of $G^{\star}_{\omega}[V(P'_i)]$. In particular if $P'_i = [v_1,v_2,\ldots,v_{\beta/12}]$, then for any $1 \leq j < k \leq \beta/12$, the set $ H(P'_i)$ contains an edge $(v_j,v_k)$ with weight $\dist_G(v_j,v_k)$. 
The claim follows now from the fact that $P_{u,v}$ was chosen to be a shortest $u$-$v$ path in $G \cup H'$ with the minimal number of hops. Suppose that we had $3$ representatives $a,b,c \in P'_i$ in $S'$ which appears in $P_{u,v}$ in this order, then we can remove vertex $b$ and get a path from $u$ to $v$ of the same length but with less hops, thus getting a contradiction.
\EPF

\BL\label{lem:weighted_sets_bound1}
There is a vertex subset $S'' \subseteq V(S') \bigcap M$ such that:
\begin{itemize}
\item $|S''|=\beta/24$,
\item each $P'_i \in \mathcal{Q}$ contains at most one vertex from $S''$, i.e., $|V(P'_i) \cap V(S'')|\leq 1$,
\item letting $P_s$ be the unique path in $\mathcal{P}$ containing $s \in S''$, it holds that $\len(P_s)\leq \epsilon \cdot \len(P_{u,v})$.
\end{itemize}
\EL
\BPF
By Lemma \ref{lem:reminder_of_the_graph_bound1}, there is a set $S_1 \subseteq V(S') \cap M $ such that $|S_1| \geq \beta/3 - \beta/12 = \beta/4$.
By Lemma \ref{lem:unique_elements_bound1}, there is a set $S_2 \subseteq S_1$ such that $|S_2| \geq |S_1|/2 = \beta/8$ and in addition, each vertex in $S_2$ is contained in some $P'_i$ such that $|V(P'_i)\cap S_2|=1$.
Let $\mathcal{P}'' \subseteq \mathcal{P}'$ be a subset of $\beta/8$ paths that intersect $S_2$. 
That is, for every path $P \in \mathcal{P}''$, it holds that $|V(P)\cap S_2|=1$.

Assume without loss of generality that $\mathcal{P}'' = \{Q_{i,1},Q_{i,2},\ldots,Q_{i,\beta/8}\}$ is ordered by the ordering given in the ordered path collection $\mathcal{Q}$. That is, $Q_{i,j}$ appears in $\mathcal{Q}$ before $Q_{i,j+1}$ for every $j \in \{1,\ldots, \beta/8-1\}$. 
Let $\ell$ be the index of the path $Q_{i,\beta/24}$ in $\mathcal{Q}$. That is, the $\ell$'th path in $\mathcal{Q}$, namely, $P'_{\ell}$ is the path $Q_{i,\beta/24}$. The path $P_{u,v}$ then contains a vertex subset $S_3 \subseteq S_2$ of cardinality $\beta/12$. Specifically, each vertex in $S_3$ appears on a unique path in $\{Q_{i,\beta/24+1}, \ldots, Q_{i,\beta/8}\}$ (and each path in this set intersects a unique vertex in $S_3$). Note that $S_3$ induces a shortest path $P_3$ of exactly $\beta/12$ hops in $G^{\star}_\omega$. By property (N5), it holds that $\len(P'_{\ell+1})\leq \len(P_3)$. See Fig. \ref{fig:hopset-fig} for an illustration.
%
%
%
%$P'_{\beta/8 - \beta/12} = P'_{\beta/24}$
%
%
%set based on the iteration index in which they were added to $\mathcal{Q}$. That is, the path $P'_j$ was added to $\mathcal{Q}$ before $P'_{j+1}$ for every $j \in \{1,\ldots, \beta/8\}$. 
%Let $i$ be the iteration index in which the path $P'_{\beta/8 - \beta/12} = P'_{\beta/24}$ was added to $\mathcal{Q}$. 
%
%The path $P_{u,v}$ then contains a vertex subset $S_3 \subseteq S_2$ of cardinality $\beta/12$. Specifically, each vertex in $S_3$ appears on a unique path in $\{P'_{\beta/24+1}, \ldots, P'_{\beta/8}\}$ (and each path in this set intersects a unique vertex in $S_3$. Note that  $S_3$ induces a shortest path $P_3$ of exactly $\beta/12$ hops in $G^{\star}_\omega$ such that in iteration $i+1$ of Step (1), the path added to $\mathcal{Q}$ is of length at most $\len(P_3)$. See Fig. \ref{fig:hopset-fig} for an illustration.
%%
%
%Since the set $S_3 \subseteq P_{u,v}$ induces a shortest path of $\beta/12$ hops in $G^{\star}_\omega$, and hence in the iteration $i+1$ of Step 1, the path added to $\mathcal{Q}$ is of length at most $\len(P_{u,v})$. 
Therefore, 
we can conclude that $\len(P_{u,v})\geq \len(P_3)\geq \len(Q_{i,\beta/24 + 1})$. By combining with the ordering of  
$\mathcal{P}''$ and property (N4) of the nice set $\mathcal{Q}$, we have that 
$$\len(Q_{i,1})\leq \len(Q_{i,2})\leq \ldots \leq \len(Q_{i,\beta/24 + 1})\leq \len(P_{u,v})~.$$
%%The desired set $S''$ is given by $S''=S_2 \cap \bigcup_{j=1}^{\beta/24} V(P'_j)$. 
The desired set $S''$ is given by $S''=S_2 \cap \bigcup_{j=1}^{\beta/24} V(Q_{i,j})$. 
%
%
%Letting $S''=S_2 \cap \bigcup_{i=1}^{\beta/24} V(P'_i)$, we have that:
%\[
%\len(P'_1) \leq \len(P'_2) \leq \ldots \leq  \len(P'_{\beta/24})\leq \len(P_{u,v})~.
%\]
%\textbf{Merav:} do we need to specify this again? as it seems to appear before \\
 
Finally, by Step (3) of the algorithm, we have every vertex in $S''$ lies on some path $P_{i',j'} \in \mathcal{P}$ where $P_{i',j'} \subseteq P'_{i'}$ for $P'_{i'} \in \mathcal{Q}$. In addition, $\len(P_{i',j'})\leq \epsilon \cdot \len(P'_{i'})$. Since $\len(P'_{i'})\leq \len(P_{u,v})$, we have that $\len(P_{i',j'})\leq \epsilon \cdot \len(P_{u,v})$ as desired. 
\EPF
%\BCR\label{lem:weighted_sets_refinement_bound1}
%There is a set $S'' \subseteq S' \bigcap M$ such that $|S''|=\beta/24$. Furthermore for all $i$ each $P_i$ contains at most one vertex from $S''$. Finally if vertex $s \in S''$ is contained in $P_i$ for some $i$, then $\text{len}(P_i) \leq \epsilon \cdot \text{len}(P_{u,v})$.
%\ECR
%\BPF
%This follows immediately from Lemma \ref{lem:weighted_sets_bound1} and step 3 of Algorithm A.
%\EPF
We turn to complete the stretch argument on the $\beta$-hop distances in $G \cup H$. The next claim provides a multiplicative stretch of $(1+2\epsilon)$. Providing a stretch of $(1+\epsilon)$ can be done by applying the algorithm with $\epsilon'=\epsilon/2$
\begin{clm}\label{cl:stretch-hopset}
$\dist_{G \cup H}^{\beta}(u,v) \leq (1+2\epsilon) \len(P_{u,v})$.
\end{clm}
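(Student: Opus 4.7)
The plan is to exhibit an explicit $u$-$v$ path $Q$ in $G\cup H$ with at most $\beta$ hops and length at most $(1+2\epsilon)\len(P_{u,v})$. I would assemble $Q$ from four pieces: the prefix of $P_{u,v}$ from $u$ to a sampled vertex $v^*\in V'$, a single shortcut edge $(v^*,u_{j^*})\in E(v^*,P_w)$ jumping from $v^*$ onto a sampled sub-path $P_w\in\mathcal{P}'$, a single shortcut edge $(u_{j^*},w)$ inside the $H(P'_{i'})$ clique installed in Step 2, and the suffix of $P_{u,v}$ from $w$ to $v$.

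First, I would certify the existence of the two pivots $v^*$ and $w$. Since $|V(S)|\geq \beta/3$ and each vertex enters $V'$ independently with probability $\Theta(\log n/\beta)$, a Chernoff bound gives $V'\cap V(S)\neq\emptyset$ w.h.p., producing $v^*$. For $w$, I would invoke Lemma~\ref{lem:weighted_sets_bound1}: the set $S''\subseteq V(S')\cap M$ has $\beta/24$ vertices lying on \emph{distinct} paths of $\mathcal{Q}$, each of whose associated sub-path $P_s\in\mathcal{P}$ satisfies $\len(P_s)\leq\epsilon\cdot\len(P_{u,v})$. Because the sub-paths of $\mathcal{P}$ assigned to the vertices of $S''$ are distinct and each is sampled into $\mathcal{P}'$ with probability $\Theta(\log n/\beta)$, a Chernoff bound yields, w.h.p., some $w\in S''$ whose enclosing sub-path $P_w\in\mathcal{P}'$.

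The crux of the argument is selecting the intermediate vertex $u_{j^*}$. Writing $P_w=[u_1,\ldots,u_k]$ and letting $j_w$ be the index of $w$ on $P_w$, note that $v^*$ precedes $w$ on $P_{u,v}$, so $v^*\leadsto_G w$; hence the first vertex $u_i\in P_w$ reachable from $v^*$ in $G$ has $i\leq j_w$, and $(v^*,u_i)\in E(v^*,P_w)$. I would let $u_{j^*}$ be the last vertex added to $E(v^*,P_w)$ with index at most $j_w$. By the geometric-decrease rule used when forming $E(v^*,P_w)$, one of the following holds: (a) $u_{j^*}$ is the final vertex added to $E(v^*,P_w)$, so every $j'>j^*$ satisfies $(1+\epsilon)\dist_G(v^*,u_{j'})\geq\dist_G(v^*,u_{j^*})$; (b) the next added vertex after $u_{j^*}$ has index strictly greater than $j_w$, so every index $j'$ in $(j^*,j_w]$ satisfies the same threshold inequality. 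In both cases, taking $j'=j_w$ yields $\dist_G(v^*,u_{j^*})\leq(1+\epsilon)\dist_G(v^*,w)$. Since $u_{j^*}$ and $w$ both lie on the shortest path $P'_{i'}\supseteq P_w$ of $G^{\star}_\omega$ with $u_{j^*}$ preceding $w$, the edge $(u_{j^*},w)\in H(P'_{i'})\subseteq H$ has weight $\dist_G(u_{j^*},w)\leq\len(P_w[u_{j^*},w])\leq\len(P_w)\leq\epsilon\cdot\len(P_{u,v})$.

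Concatenating the four pieces yields a path of at most $\beta/3+1+1+\beta/3\leq\beta$ hops. Using that every sub-path of $P_{u,v}$ has length equal to the corresponding $G$-distance (because $H'\subseteq G^{\star}_\omega$ is weighted by true $G$-distances), I would conclude
\[
\len(Q)\leq\dist_G(u,v^*)+(1+\epsilon)\dist_G(v^*,w)+\epsilon\cdot\len(P_{u,v})+\dist_G(w,v)\leq(1+2\epsilon)\len(P_{u,v}),
\]
where the final inequality uses $\dist_G(u,v^*)+\dist_G(v^*,w)+\dist_G(w,v)=\len(P_{u,v})$ together with $\dist_G(v^*,w)\leq\len(P_{u,v})$. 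The main obstacle is combining the two independent error sources cleanly: the $(1+\epsilon)$ slack from the geometric spacing of $E(v^*,P_w)$ and the additive $\epsilon\cdot\len(P_{u,v})$ from landing on $u_{j^*}$ rather than on $w$ exactly, jointly producing only a $(1+2\epsilon)$ multiplicative blow-up.
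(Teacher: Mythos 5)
Your proof is correct and follows essentially the same route as the paper: sample a pivot in the prefix $S$, use Lemma~\ref{lem:weighted_sets_bound1} to find a vertex $w$ in the suffix lying on a sampled sub-path $P_w$ whose length is at most $\epsilon\cdot\len(P_{u,v})$, and jump onto $P_w$ via an edge of $E(v^*,P_w)$ chosen by the geometric-decrease rule, absorbing one $\epsilon$ factor from the $(1+\epsilon)$-spaced distances and another from $\len(P_w)$. The only cosmetic difference is that, after landing at $u_{j^*}$, you finish the detour with a single clique edge $(u_{j^*},w)\in H(P'_{i'})=G^{\star}_{\omega}[V(P'_{i'})]$ from Step~2, whereas the paper walks along the sub-path $P_i[z,x]\subseteq H(P'_{i'})$; both rely on the same Step~2 edges, give the same $\epsilon\cdot\len(P_{u,v})$ detour bound, and fit comfortably within the $\beta$-hop budget. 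Your explicit case analysis of "last vertex of $E(v^*,P_w)$ with index at most $j_w$" is a cleanly written-out version of the inequality the paper states without elaboration.
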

\begin{proof}
By Lemma \ref{lem:weighted_sets_bound1} we have that $S''$ contains representatives from $\Omega(\beta)$ paths in $\mathcal{P}$, and thus  w.h.p. $S''$ contains at least one representative vertex $x$ on some $P_i=[a_1,\ldots, a_k]$ in $\mathcal{P}'$ (as each path in $\mathcal{P}$ is chosen with probability $O(\log n/\beta$) in Step 4 of the algorithm). In addition, $S$ contains at least one vertex $y \in V'$ w.h.p. (as each vertex in $S$ is chosen with probability $O(\log n/\beta)$ in Step 4 of the Algorithm, and $|S| = \Omega(\beta)$). Therefore the set $E(y,P_i)$ is in $H$.
%By definition, $E(y,P_i)$ contains some edge $(y,z)$ for $z \in P_i[a_1,x]$ and $(1+\epsilon) d_G(y,x)\geq (d_{G}(y,z)+|P_i[z,x]|)$.
By definition, $E(x,P_i)$ contains some edge $(x,z)$ for $z \in P_i[a_1,y]$ and 
\begin{equation}\label{binary_bound1}
(1+\epsilon) d_G(x,y)\geq \dist_{G}(x,z)
\end{equation}
Therefore $G \cup H$ contains a $u$-$v$ path $P'_{u,v}=P_{u,v}[u, x] \circ (x,z) \circ P_i[z,y] \circ P_{u,v}[y,v]$ with at most $\beta$ hops. Thus, the $\beta$-hop $u$-$v$ distance in $G \cup H$ can be bounded by:
\begin{eqnarray}
\dist_{G \cup H}^{\beta}(u,v) &\leq& \len(P'_{u,v}) \leq \len(P_{u,v}[u, x])+\dist_G(x,z)+\len(P_i[z,y])+\len(P_{u,v}[y,v]) \notag
\\& \leq &
\len(P_{u,v}[u, x])+(1+\epsilon) \dist_G(x,y) + \len(P_i[z,y]) +\len(P_{u,v}[y,v])  \label{using_the_binary_bound}
\\& \leq & 
\len(P_{u,v}[u, x])+(1+\epsilon) \dist_G(x,y) + \epsilon \cdot  \len(P_{u,v}) +\len(P_{u,v}[y,v]) \label{tinylengthbound1}
\\&\leq&
(1+2\epsilon) \len(P_{u,v}) \notag,
\end{eqnarray}
where Inequality (\ref{using_the_binary_bound}) follows from Inequality  (\ref{binary_bound1}), and  Inequality (\ref{tinylengthbound1}) follows from Lemma \ref{lem:weighted_sets_bound1}. See Fig. \ref{fig:hopset-fig} for an illustration.
\end{proof}
This completes the proof of Theorem \ref{thm:main_hopset_bound_small_D}.
\begin{figure}[h!]
\begin{center}
\includegraphics[scale=0.35]{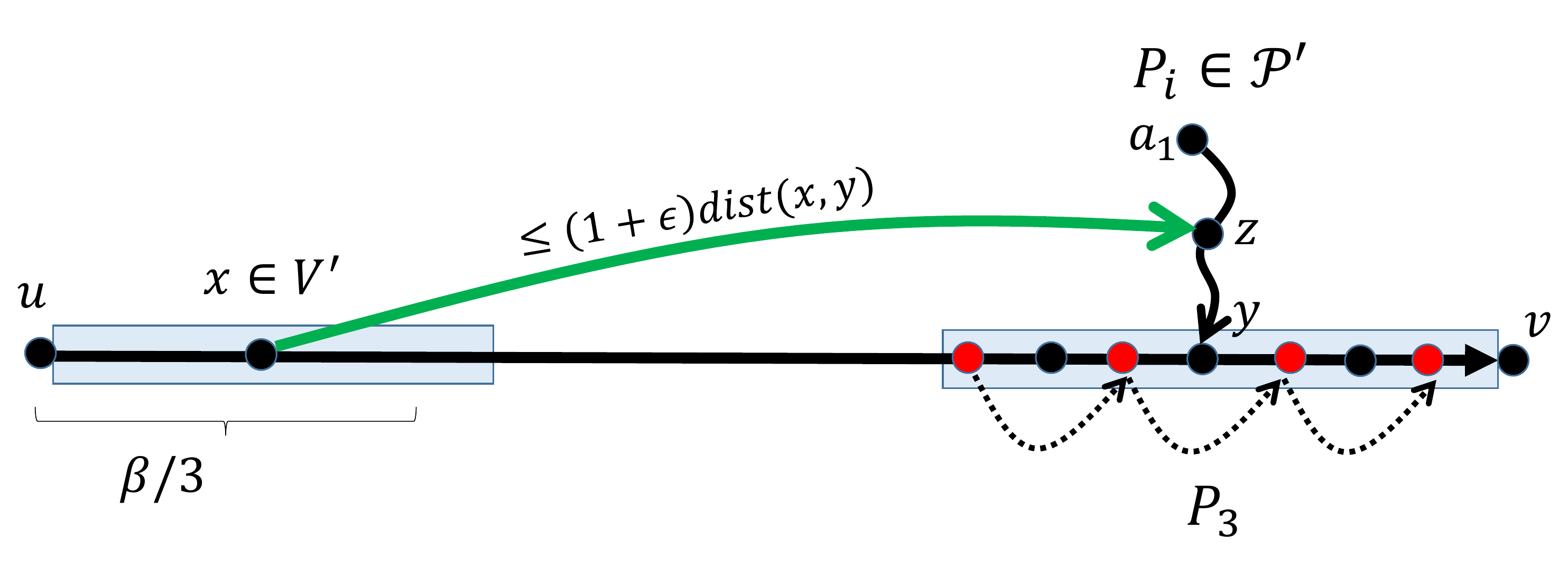}
\caption{\sf An illustration to the approximate $\beta$-hop path obtained by Alg. $\HopsetSmallHop$ (useful for the proofs of Lemma \ref{lem:weighted_sets_bound1} and Claim \ref{cl:stretch-hopset}).
Shown is a $u$-$v$ shortest path in $G \cup H'$. The $\beta/3$-length suffix $S$ contains w.h.p. a sampled vertex $x$ from $V'$. The $\beta/3$-length suffix $S'$ contains unique representatives from $\Omega(\beta)$ paths in $\mathcal{Q}$. The red vertices corresponds to the set $S_3$ (in the proof of Lemma \ref{lem:weighted_sets_bound1}) which induces a $\beta/12$-hop path $P_3$. \label{fig:hopset-fig} 
}
\end{center}
\end{figure}
 %\begin{eqnarray*}
%\dist_{G \cup H}^{O(\beta)}(u,v) &\leq& len(P'_{u,v}) \leq len(P_{u,v}[u, y])+d_G(y,z)+|P_i[z,x]|+len(P_{u,v}[x,v])
%\\& \leq &
%len(P_{u,v}[u, y])+(1+\epsilon) len(P_{u,v}[y,x])+len(P_{u,v}[x,v])
%\\&=&
%(1+\epsilon) len(P_{u,v})
%\end{eqnarray*}
\noindent In a similar manner to the proof of Theorem \ref{thm:second_case_mainthm1}, we can use again a sampling procedure to obtain approximate hopsets with hopbound $\beta=\Omega(n^{1/4})$. %For completeness, we provide the proof in Appendix \ref{app:missing-hopsets}. 
\BTHM\label{thm:main_hopset_bound_large_D}
For every $n$-vertex positively weighted directed graph $G$, $\beta=\Omega(n^{1/4})$ and $\epsilon \in (0,1)$, one can compute a weighted edge set $H \subseteq G^{\star}_{\omega}$ of cardinality $|H|=\widetilde{O}\left( \left(\frac{n}{\beta}\right)^{5/3} \epsilon^{-2} \log (n W) \right)$ such that for every $u,v \in V(G)$ it holds that:
\begin{equation} \label{eq:shortcut-eq}
\dist_{G}(u,v)\leq \dist_{H \cup G}^{(\beta)}(u,v)\leq (1+\epsilon)\dist_{G}(u,v)~.
\end{equation}
\ETHM

\def\APPENDLARGEHOP{
We present Algorithm $\HopsetLargeHop$ for computing an $(\epsilon,\beta)$ hopset given hopbound $\beta\geq n^{1/4}$ and $\epsilon \in (0,1)$. 
\begin{mdframed}[hidealllines=false,backgroundcolor=gray!00]
\center \textbf{Algorithm $\HopsetLargeHop$:}
\begin{flushleft}
\textbf{Input:} An $n$-vertex weighted $G=(V,E,\omega)$ with weights in $[1,W]$, hopbound $\beta\geq n^{1/4}$ and $\epsilon>0$. \\
\textbf{Output:} An $(\epsilon,\beta)$ hopset $H \subseteq G^{\star}$.
\end{flushleft}

\vspace{-2pt}
\begin{enumerate}
\item Let $V' \subset V$ be a random sample of $n' = O((n/\beta)^{4/3} \log n)$ vertices of $G$.

\item Let $h(u,v)$ be the minimal number of hops\footnote{That is, for $h=h(u,v)$ it holds that $\dist^{(h)}_G(u,v)=\dist_G(u,v)$ and $\dist^{(h-1)}_G(u,v)>\dist_G(u,v)$.} over all $u$-$v$ shortest path in $G$.
Define $G'=(V',E')$ where 
$$E'=\{(u,v) ~\mid~ u,v \in V' \mbox{~and~} h(u,v)\leq  \beta^{4/3}/n^{1/3} \}$$ 
and the weight of edge $(u,v)$ in $E'$ is given by $\dist^{(h)}_G(u,v)$ for $h=\beta^{4/3}/n^{1/3}$.
%\item Let $h(u,v)$ be the smallest number of hops between vertex $u$ and $v$ in $G$. Let $G'=(V',E')$ where 
 %$E'=\{(u,v) ~\mid~ u,v \in V' \mbox{~and~} h(u,v)\leq  \beta^{4/3}/n^{1/3} \}$ and the weight of edge $(u,v)$ in $E'$ is the length of the shortest path between $u$ and $v$ in $G$ of at most   $\beta^{4/3}/n^{1/3} $ hops.
\item Output $H = \HopsetSmallHop(G',\beta'=(n')^{1/4} / \log n,\epsilon)$.
\end{enumerate}
\end{mdframed}
Now we describe the time complexity of each step in the algorithm above.

\paragraph{Running Time and HopBound.} 
We assume that the weight are integers in the range $[1,W]$ for the running time analysis.
The computation of the graph $G'$ can be done in $\widetilde{O}(n^3)$ time by computing the APSP of $G$ (technically the Min-Sum Matrix Product for a logarithmic number of times).
Applying Algorithm $\HopsetSmallHop$ takes $O((n')^3 \cdot \poly\log(n W))$ time by Theorem \ref{thm:main_hopset_bound_small_D}. %Lemma \ref{lem:efficient-hopset}.
The number of edges added to $H$ in step $3$ is $\widetilde{O}((n')^{5/4} \epsilon^{-2} \log(n' \cdot W) )= \widetilde{O}((n/\beta)^{5/3} \epsilon^{-2} \log(nW))$. 

\paragraph{Diameter bound.} 
For a fixed pair $u,v$, let $P_{u,v}$ be a shortest $u$-$v$ path in $G \cup H$ with the minimal number of hops $h(u,v)$. We assume that $P_{u,v}$ has at least $\beta$ edges (hops), for otherwise we are done. 
Let $r=\beta^{4/3}/n^{1/3}$. 
Partition $P_{u,v}$ into consecutive subpaths of $r/2$ hops (except the last subpath which can be smaller than $r/2$).
Let these subpaths be $P_1,P_2,\ldots,P_q$ where $P_{u,v}=P_1 \circ P_2 \circ \ldots \circ P_q$ and
$|P_1|=|P_2|=\ldots=|P_{q-1}|=r/2$ and furthermore $|P_q|\leq r/2$.
Since every vertex in $V'$ is chosen with probability $n' /n$, by the Chernoff bound it holds that, w.h.p., $V(P_i) \cap V'\neq \emptyset$ for every $1 \leq i \leq q-1$. Let $v_i$ be some some representative vertex in $V' \cap V(P_i)$.  We then have that $\dist_G(v_i, v_{i+1})=\dist^{(r)}_G(v_i, v_{i+1})$ for every $1 \leq i \leq q-1$. 
Therefore, by Step (2) of Algorithm $\HopsetLargeHop$, the graph $G'$ contains a directed path from $v_1$ and $v_{q-1}$ of length $\dist_G(v_1,v_{q-1})$.

Therefore, after Step (3) of the algorithm, $G' \cup H$ contains a directed $v_1$-$v_{q-1}$ path $P'$ of at most $(n')^{1/4} / \log n$ hops and $\len(P')\leq (1+\epsilon)\dist_G(v_1,v_{q-1})$. Since each $G'$-edge corresponds to a path of at most $r$ hops in $G$, we get that $P'$ corresponds to a path of at most $r \cdot (n')^{1/4} / \log n  = O(\beta)$ hops in $G \cup H$.  

Next, observe that the $O(\beta)$-hop path $P' \subseteq G \cup H$ can be extended into an $O(\beta)$-hop $u$-$v$ path $P''$ in $G \cup H$. This path is given by
$$P''=P_{u,v}[u,v_1] \circ P' \circ P_{u,v}[v_{q-1},v]~.$$
Since $\len(P')\leq (1+\epsilon)\dist_G(v_1,v_{q-1})$, it holds also that $\len(P'')\leq (1+\epsilon)\dist_G(u,v)$. 
%To get a hopbound of $\beta$ rather than $\widetilde{O}(\beta)$, one can simply 
%apply Algorithm $\HopsetSmallHop$ with hopbound parameter $\beta'=(n')^{1/4}/\poly(\log n)$. 
%\textbf{MP: so why no doing that from the first place?}. \textbf{SK: I will change it accordingly}
 This concludes the proof of Theorem \ref{thm:main_hopset_bound_large_D}.

When restricting the number of edges in the hopset to $\widetilde{O}_{\epsilon}(n)$, we get hopbound of $\beta=n^{2/5}$. This should be compared with the previous hopbound of $O(\sqrt{n})$. 
\BCR
For every $n$-vertex positively weighted directed graph $G$, $\beta=n^{2/5}$ and $\epsilon \in (0,1)$, one can compute a weighted edge set $H \subseteq G^{\star}_{\omega}$ of cardinality $|H|=\widetilde{O}\left( n \epsilon^{-2} \log (n W) \right)$ such that for every $u,v \in V(G)$ it holds that:
\begin{equation} \label{eq:shortcut-eq}
\dist_{G}(u,v)\leq \dist_{H \cup G}^{(\beta)}(u,v)\leq (1+\epsilon)\dist_{G}(u,v)~.
\end{equation}
\ECR
}%

%\textbf{Remark 1:} While the running time of Algorithm A is not polynomial in n, one can build a polynomial time algorithm for the hopset bounds in Algorithm A, 
%we omit the details in this paper.

\section{Lower Bounds for Directed Shortcuts}\label{pettie_appendix}
In this section, we generalize the following lower bound result of \cite{HuangP18}, and provide the proof for Theorem \ref{pettie_lower_bound}. Our goal is to provide a smooth tradeoff between the number of shortcut edges and the diameter obtained. 
\BTHM\label{pettie_bound1}[\cite{HuangP18}]
There exists an $n$-vertex digraph $G=(V,E)$ such that for any shortcut set $E' \subseteq G^{\star}$ with $|E'|=O(n)$, the graph $G \cup E'$ has diameter $\Omega(n^{1/6})$.
\ETHM
The lower bound graph $G$ of Theorem \ref{pettie_bound1} has been shown in \cite{HuangP18} to satisfy the following properties.
\begin{enumerate}
\item Let $\Delta_{\inn}, \Delta_{\outt}$ be the maximum in-degree (resp., out-degree) in $G$. 
Then $\Delta_{\inn}, \Delta_{\outt}= O(n^{1/6})$. 
\item $G$ contains a collection $\mathcal{P}$ of $\Theta(n)$ paths, each of length $D=\Theta(n^{1/6})$.
\item Each path $P \in \mathcal{P}$ is the only directed path in $G$ between its endpoints.
\item Every two paths $P, P' \in \mathcal{P}$ intersect on at most one vertex.
\item Each vertex of $G$ is contained in at most $\max(\Delta_{\inn},\Delta_{\outt}) = O(n^{1/6})$ paths of $\mathcal{P}$. 
\item The graph $G$ is a DAG (directed acyclic graph).
\end{enumerate}
Notice that property (5) follows from properties (1) and (4), as by property (4) the paths of $\mathcal{P}$ are edge-disjoint. We use the following key property from \cite{HuangP18}:
\begin{lem}[Slight Restatement of Lemma 2.3 from \cite{HuangP18}]\label{lem:LB-aux}
Let $E'$ be a shortcut set for $G=(V,E)$. Any shortcut edge can only be useful for the endpoints of at most one path in $\mathcal{P}$. Consequently, if the diameter of $G'$ is strictly less than $D$, then $|E'|\geq |\mathcal{P}|$. 
\end{lem}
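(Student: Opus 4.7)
The plan is to formalize ``useful'' by the natural definition: a shortcut $(u,v) \in E'$ is \emph{useful} for the endpoint pair $(s_P, t_P)$ of a path $P \in \mathcal{P}$ if $s_P \leadsto_G u$ and $v \leadsto_G t_P$. Intuitively this says that the shortcut could on its own participate in an $s_P$-to-$t_P$ walk using only $G$-edges together with $(u,v)$. I would then establish the two assertions of the lemma in sequence.

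First, I would show that each shortcut is useful for at most one path. Fix $(u,v) \in E'$ and suppose it is useful for $P$ with endpoints $(s_P,t_P)$. Since $(u,v) \in G^{\star}$, we have $u \leadsto_G v$, so concatenating the three witness paths yields a walk $s_P \leadsto_G u \leadsto_G v \leadsto_G t_P$ inside $G$. Acyclicity (property 6) promotes this walk to a simple directed path, and by the uniqueness of $P$ as the only $s_P$-$t_P$ path in $G$ (property 3) the walk coincides with $P$, forcing $\{u,v\} \subseteq V(P)$. If a distinct path $P' \in \mathcal{P}$ were also served by $(u,v)$, the same argument would yield $\{u,v\} \subseteq V(P')$, so $|V(P) \cap V(P')| \geq 2$, contradicting property 4 (assuming $u \neq v$; when $u = v$ the shortcut is a trivial self-loop that cannot shorten any directed path and can be discarded).

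Next, I would show that when the diameter of $G \cup E'$ is strictly less than $D$, every $P \in \mathcal{P}$ can be charged to a distinct useful shortcut. Let $Q$ be a shortest $s_P$-$t_P$ path in $G \cup E'$, so $|Q| < D$. Since $P$ is the only $s_P$-$t_P$ path in $G$ and has length $D$, $Q$ must contain at least one shortcut edge. Decompose $Q$ as $Q_0 \cdot (u_1,v_1) \cdot Q_1 \cdot (u_2,v_2) \cdots (u_k,v_k) \cdot Q_k$, where each $Q_i \subseteq G$ is a (possibly empty) $G$-path and each $(u_i,v_i) \in E'$. The prefix $Q_0$ witnesses $s_P \leadsto_G u_1$, and splicing the segments $Q_1, \ldots, Q_k$ with the transitive-closure witnesses $u_i \leadsto_G v_i$ yields $v_1 \leadsto_G u_2 \leadsto_G v_2 \leadsto_G \cdots \leadsto_G v_k \leadsto_G t_P$, hence $v_1 \leadsto_G t_P$. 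Therefore $(u_1,v_1)$ is useful for $(s_P, t_P)$. The assignment $P \mapsto (u_1,v_1)$ is then an injection $\mathcal{P} \hookrightarrow E'$ by the first part, giving $|E'| \geq |\mathcal{P}|$.

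The main obstacle I expect is a clean articulation of the ``useful'' relation together with the careful extraction of the first shortcut's usefulness from a possibly multi-shortcut path $Q$. The key leverage is that once all concatenations are performed inside the DAG $G$, the resulting object is a simple path and must coincide with the unique $s_P$-$t_P$ path $P$; this pins every shortcut endpoint encountered along $Q$ to $V(P)$, after which properties (3), (4), and (6) slot in to deliver the counting immediately.
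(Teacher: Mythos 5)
Your argument is correct, but note that the paper itself supplies no proof of this lemma: it is imported verbatim (as a ``slight restatement'') from Lemma~2.3 of Huang--Pettie, and is then used as a black box in the proof of Lemma~\ref{inner_outer_lem1}. So there is no in-paper argument to compare against; what you have written is a self-contained reconstruction, and it is sound. Your formalization of \emph{useful} is the natural one, and the two pillars of the proof are exactly the right ones: (i) in the DAG $G$ the concatenation $s_P \leadsto_G u \leadsto_G v \leadsto_G t_P$ is a simple path, hence equals $P$ by the uniqueness property (3), pinning $u,v$ to $V(P)$, after which property (4) forbids a second path from containing both; and (ii) any $s_P$-$t_P$ path in $G\cup E'$ of length $<D$ must leave $E(G)$ at least once (else it would be $P$ itself, of length $D$), and its first non-$G$ edge is useful for $P$, giving an injection $\mathcal{P}\hookrightarrow E'$. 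Two pedantic points worth a sentence each if you write this up: when decomposing $Q$, take the ``shortcut edges'' to be those of $E'\setminus E$ so the decomposition is unambiguous when $E'$ overlaps $E$; and the charging map requires fixing one shortest path per $P$ before extracting its first shortcut, which you implicitly do. Your handling of the degenerate $u=v$ case is also fine, since the transitive closure of a DAG contains no nontrivial self-loops.
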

Given an integer parameter $k \geq 1$, let $G_k$ be a graph obtained from graph $G$ by replacing each vertex $v_i$ of $G$ by a directed path $Q_i=[u^i_1,u^i_2,\ldots,u^i_{k+1}]$ of length $k$.
Formally, letting $V(G)=\{v_1,v_2,\ldots,v_n\}$, then for every $v_i \in V(G)$, let $Q_i=[u^i_1,u^i_2,\ldots,u^i_{k+1}]$ be a directed path of length $k$. Thus $V(G_k)=\{ u^i_j ~\mid~ i \in \{1,\ldots, n\}, j \in \{1,\ldots, k\}\}$,
and $|V(G_k)|=n(k+1)$ vertices. In addition, for every $(v_i,v_j) \in E(G)$, the edge set $E(G_k)$ contains the edge $(u^i_{k+1},u^j_1)$. Consequently, $G_k$ contains a collection $\mathcal{P}'$ of directed paths where 
$|\mathcal{P}'|=|\mathcal{P}|=\Theta(n)$. Specifically, for every path $P_j=[v_{j,1}, \ldots, v_{j,D+1}] \in \mathcal{P}$, there is a corresponding path $P'_j=Q_{j,1}\circ \ldots \circ Q_{j,D+1}$ in $\mathcal{P}'$ formed by concatenating the $k$-length paths of the $P_j$'s vertices. The length of each path $P'_j$ is thus $D \cdot k$. See Fig. \ref{fig:LB} for an illustration.
%\[
%q_1,(u^1_{k+1},u^2_1),q_2,(u^2_{k+1},u^3_1),q_3,\ldots,q_{D+1}
%\]
We next show that the diameter lower bound on $G$ can be translated into a diameter lower bound in $G_k$.
Interestingly, a key property of the graph $G$ that we exploit is that $\Delta_{in},\Delta_{out}=O(D)$. As we will see, this plays an important role in our argument. 
\BL\label{inner_outer_lem1}
Let $E'$ be a shortcut set of edges for graph $G_k$. If the diameter of $G_k\cup  E'$ is strictly less than $D \cdot k/2$, then $|E'| = \Omega(|\mathcal{P}'|)$.
\EL
\BPF
We classify the edges in $E'$ into two types. An edge $e=(u,v) \in E'$ is an \emph{outer edge} if it connects
vertices in \emph{distinct} $k$-length paths, i.e., $u\in Q_i,v \in Q_j$ where $G$-vertices $v_i,v_j$ lie on the \emph{same} path in $\mathcal{P}$.  An edge $e=(u,v)$ is \emph{internal} if $u$ and $v$ belong to the same $k$-length path, say $Q_j$ corresponding to vertex $v_j \in G$. Let $E'_{out},E'_{in}$ be the subset of outer (resp., internal) edges in $E'$. See Fig. \ref{fig:LB} for an illustration.
%Denote by an outer edge an edge of $E'$ between two vertices $u,v$ of some path $P' \in \mathcal{P}'$,
%such that $u$ and $v$ belong to subpaths corresponding to different vertices of the path $P \in \mathcal{P}$    which induces path $p' \in P'$. Denote by an inner edge an edge of $E'$ between two vertices $u,v$ of  some path $P' \in \mathcal{P}'$,
 %such that $u$ and $v$ belong to the same subpath corresponding to a vertex in path $P \in \mathcal{P}$ which induces path $P' \in \mathcal{P}'$. 
Notice that by properties (3) and (6), if an edge in $E'$ is neither outer nor internal (i.e., an edge whose endpoints are on distinct paths in $\mathcal{P}'$), then it does not shortcut any of the paths in $\mathcal{P}'$. We can therefore assume, w.l.o.g., that $E'$ consists of only outer and internal edges, i.e., $E'=E'_{out} \cup E'_{in}$.

By Lemma \ref{lem:LB-aux}, it holds that each outer edge can be useful to shortcut at most one path in $\mathcal{P}'$. This holds since each outer edge corresponds to an edge in $G^{\star}$ (the transitive closure of $G$). If at least half of the paths in $\mathcal{P}'$ have some outer edge in $E'_{out}$, then clearly, $|E'|\geq |\mathcal{P}'|/2$, and we are done. Assume otherwise, and let $\mathcal{P}''\subseteq \mathcal{P}'$ be the collection of paths that have no outer edge in $E'_{out}$. 
%Therefore, $|E'_{out}|\geq |\mathcal{P}'|$.
%Moreover, by the definition of $\mathcal{P}$ (upon which $\mathcal{P}'$ is constructed) each outer edge can shortcut at most one path in $\mathcal{P}'$ (this follows from Lemma 2.3 in \cite{HuangP18}). 
%It remains to consider the internal edges. 
A single internal edge reduces the length of a path in $\mathcal{P}''$ by a most $k$. Therefore, a path in $\mathcal{P}''$ must have at least $D/2$ internal edges in order to reduce its length to below $Dk/2$. Letting $E'_i$ be the set of internal edges in $E'_{in}$ corresponding to a path $P_i \in \mathcal{P}''$, we have that $|E'_{i}|\geq D/2$. 
By property (5), each internal edge (that corresponds to a unique vertex in $G$) can reduce the length of at most $\max\{\Delta_{\inn}, \Delta_{\outt}\}=O(n^{1/6})$ paths in $\mathcal{P}'$. 
At this point, we exploit the fact that $\max\{\Delta_{\inn}, \Delta_{\outt}\}=O(D)$.
%Specifically, we have that the set of internal edges reduces the length, to below $Dk/2$, of at most 
Hence the total number of inner edges $|E'_{in}|$ satisfy
$$|E'_{in}| \geq \frac{\sum_{P_i \in \mathcal{P}''} |E'_{i}|}{O(n^{1/6})}\geq \frac{|\mathcal{P}''|\cdot D/2}{O(n^{1/6})}\geq c|\mathcal{P}''|~,$$
for some constant $c$. Now as $|\mathcal{P}''| \geq  |\mathcal{P}'|/2$ we have that $|E'|=|E'_{out}|+|E'_{in}|=\Omega(|\mathcal{P}'|)$, as required. 
The claim follows.
%We conclude that if edge set $E'$ contains $a$ outer edges and $b$ inner edges (that is $|E'| \geq a+b$) and $a+b \leq |E'| < |\mathcal{P}'|/r$ for some fixed large enough integer $r>0$, then the 
%edges of $E'$ cannot shortcut all the paths in the family $P'$ to diameter less than $Dk/2$. This follows as the $a$ outer edges shortcut at most $a$ paths in $\mathcal{P}'$ 
%
%
%and the $b$ inner edges reduce the length of at most $O(n^{1/6}) b / (D/2) = O(b)$ different paths
 %to length less than  $D \cdot (k/2)$ (as $D=\Theta(n^{1/6})$).  
%
%
%That is, together we reduce at most $O(a+b)$ different paths. Now as $a+b \leq |E'| < |\mathcal{P}'|/r$ we can conclude that for large enough $r$ we have $O(a+b) < |\mathcal{P}'|$ and we are done as we have shown that $|E'| \geq |\mathcal{P}'|/r$ for some fixed constant $r$ and hence $|E'| = \Omega(|\mathcal{P}'|)$.
\EPF
We are now ready to complete the proof of Theorem \ref{pettie_lower_bound}. 
%The graph $G_k$ has $N=n\cdot k$ vertices. By Lemma \ref{inner_outer_lem1}, we need to add at least $\Omega(n)=\Omega(N/k)$ edges from the transitive closure of $G_k$ in order to reduce that diameter to $D' = D \cdot k/2$.
%Since $D=n^{1/6}$, we have that $k=2D'/n^{1/6}$ and $N=2D' \cdot n^{5/6}$.
The graph $G_k$ has $n \cdot (k+1)$ vertices, and by Lemma \ref{inner_outer_lem1} we need to add at least $\Omega(|\mathcal{P}'|) = \Omega(n)$ edges from the transitive closure of $G_k$ in 
order to reduce that diameter of $G_k$ to $D' = D \cdot k/2$.
Hence $k= 2D'/D$ and $|V(G_k)|= n \cdot (k+1) = (2D' + D) \cdot n / D$. Thus
$n/D=|V(G_k)|/(2D'+D)$.  Plugging in $D=\Theta(n^{1/6})$, we have that $\Theta(n^{5/6})=\Theta(|V(G_k)|/D')$. 
We therefore have that
\[\Theta(n) = \Theta\left(\left(\frac{|V(G_k)|}{D'}\right)^{6/5}\right)~.
\]
Theorem \ref{pettie_lower_bound} follows.

%\[
%\frac{n}{D} = \frac{|V(G_k)|}{2D'}
%\]
%Now as $D=\Theta(n^{1/6})$ we have 
%\[
%\Theta(n^{5/6}) = \frac{|V(G_k)|}{2D'}
%\]
%And we conclude that 
%\[
%\Theta(n) = \Theta\left(\left(\frac{|V(G_k)|}{D'}\right)^{6/5}\right).
%\]
%Theorem \ref{pettie_lower_bound} follows.
%\section{Polynomial subcases of TSS$_P$}\label{TSSP_polycases2}

\begin{figure}[h!]
\begin{center}
\includegraphics[scale=0.35]{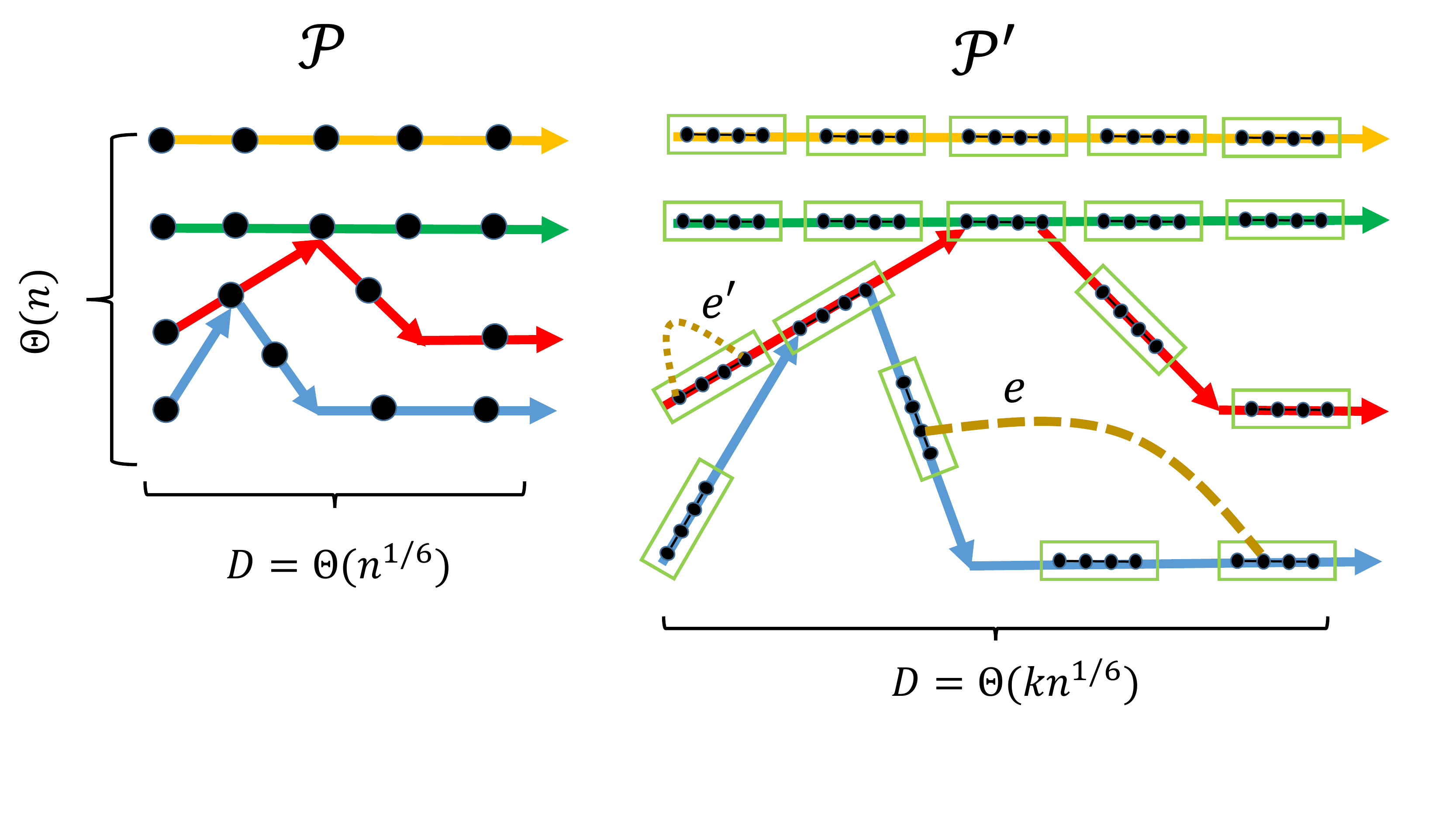}
\caption{\sf An illustration of the lower bound graph construction of Theorem \ref{pettie_lower_bound}.
Right: Illustration for the lower bound graph $G$ of Lemma \ref{pettie_bound1} from \cite{HuangP18}. 
Shown is the collection of the path collection $\mathcal{P}$, each path is of length $D=\Theta(n^{1/6})$. 
Left: The modified graph $G_k$ obtained by replacing each $G$-vertex with an $k$-length directed path. Consequently, $G_k$ consists of $(k+1) n$ vertices, and a collection $\mathcal{P}'$ of $|\mathcal{P}|$ paths, each of length $k D$. 
The dotted edge $e'$ is an internal shortcut edge, and the dashed edge $e$ is an outer shortcut edge. 
\label{fig:LB} 
}
\end{center}
\end{figure}
%\section*{Acknowledgments}
%Work supported in part by the Israel Science Foundation (grant No. 1388/16).
%Work supported in part by the Israel Science Foundation (grant No. 621/12) and by the I-CORE Program of the Planning and Budgeting Committee and the Israel Science Foundation (grant %No. 4/11).
%This research was supported by The Israel Science Foundation (grant
%No. 263/02).
\paragraph{Acknowledgment.} We are grateful for the very useful comments of SODA 2022 reviewers. The second author is grateful to Seth Pettie that sparked her interest in diameter reducing shortcuts.

\bibliographystyle{alpha}
\bibliography{thesis}

\newcommand{\etalchar}[1]{$^{#1}$}
\begin{thebibliography}{BGJ{\etalchar{+}}12}

\bibitem[ABP18]{AbboudBP18}
Amir Abboud, Greg Bodwin, and Seth Pettie.
\newblock A hierarchy of lower bounds for sublinear additive spanners.
\newblock {\em {SIAM} J. Comput.}, 47(6):2203--2236, 2018.

\bibitem[AGU72]{AhoGU72}
Alfred~V. Aho, M.~R. Garey, and Jeffrey~D. Ullman.
\newblock The transitive reduction of a directed graph.
\newblock {\em {SIAM} J. Comput.}, 1(2):131--137, 1972.

\bibitem[BGJ{\etalchar{+}}09]{BhattacharyyaGJRW09}
Arnab Bhattacharyya, Elena Grigorescu, Kyomin Jung, Sofya Raskhodnikova, and
  David~P. Woodruff.
\newblock Transitive-closure spanners.
\newblock In Claire Mathieu, editor, {\em Proceedings of the Twentieth Annual
  {ACM-SIAM} Symposium on Discrete Algorithms, {SODA} 2009, New York, NY, USA,
  January 4-6, 2009}, pages 932--941. {SIAM}, 2009.

\bibitem[BGJ{\etalchar{+}}12]{BhattacharyyaGJRW12}
Arnab Bhattacharyya, Elena Grigorescu, Kyomin Jung, Sofya Raskhodnikova, and
  David~P. Woodruff.
\newblock Transitive-closure spanners.
\newblock {\em {SIAM} J. Comput.}, 41(6):1380--1425, 2012.

\bibitem[BGW20]{BernsteinGW20}
Aaron Bernstein, Maximilian~Probst Gutenberg, and Christian Wulff{-}Nilsen.
\newblock Near-optimal decremental {SSSP} in dense weighted digraphs.
\newblock In {\em 61st {IEEE} Annual Symposium on Foundations of Computer
  Science, {FOCS} 2020, Durham, NC, USA, November 16-19, 2020}, pages
  1112--1122. {IEEE}, 2020.

\bibitem[BRR10]{BermanRR10}
Piotr Berman, Sofya Raskhodnikova, and Ge~Ruan.
\newblock Finding sparser directed spanners.
\newblock In Kamal Lodaya and Meena Mahajan, editors, {\em {IARCS} Annual
  Conference on Foundations of Software Technology and Theoretical Computer
  Science, {FSTTCS} 2010, December 15-18, 2010, Chennai, India}, volume~8 of
  {\em LIPIcs}, pages 424--435. Schloss Dagstuhl - Leibniz-Zentrum f{\"{u}}r
  Informatik, 2010.

\bibitem[CFR20a]{CaoFR20}
Nairen Cao, Jeremy~T. Fineman, and Katina Russell.
\newblock Efficient construction of directed hopsets and parallel approximate
  shortest paths.
\newblock In Konstantin Makarychev, Yury Makarychev, Madhur Tulsiani, Gautam
  Kamath, and Julia Chuzhoy, editors, {\em Proccedings of the 52nd Annual {ACM}
  {SIGACT} Symposium on Theory of Computing, {STOC} 2020, Chicago, IL, USA,
  June 22-26, 2020}, pages 336--349. {ACM}, 2020.

\bibitem[CFR20b]{CaoFRSPAA20}
Nairen Cao, Jeremy~T. Fineman, and Katina Russell.
\newblock Improved work span tradeoff for single source reachability and
  approximate shortest paths.
\newblock In Christian Scheideler and Michael Spear, editors, {\em {SPAA} '20:
  32nd {ACM} Symposium on Parallelism in Algorithms and Architectures, Virtual
  Event, USA, July 15-17, 2020}, pages 511--513. {ACM}, 2020.

\bibitem[CFR21]{CaoFR21}
Nairen Cao, Jeremy~T. Fineman, and Katina Russell.
\newblock Brief announcement: An improved distributed approximate single source
  shortest paths algorithm.
\newblock In Avery Miller, Keren Censor{-}Hillel, and Janne~H. Korhonen,
  editors, {\em {PODC} '21: {ACM} Symposium on Principles of Distributed
  Computing, Virtual Event, Italy, July 26-30, 2021}, pages 493--496. {ACM},
  2021.

\bibitem[Coh00]{Cohen00}
Edith Cohen.
\newblock Polylog-time and near-linear work approximation scheme for undirected
  shortest paths.
\newblock {\em J. {ACM}}, 47(1):132--166, 2000.

\bibitem[EN19]{ElkinN19}
Michael Elkin and Ofer Neiman.
\newblock Linear-size hopsets with small hopbound, and constant-hopbound
  hopsets in {RNC}.
\newblock In Christian Scheideler and Petra Berenbrink, editors, {\em The 31st
  {ACM} on Symposium on Parallelism in Algorithms and Architectures, {SPAA}
  2019, Phoenix, AZ, USA, June 22-24, 2019}, pages 333--341. {ACM}, 2019.

\bibitem[Fin20]{Fineman20}
Jeremy~T. Fineman.
\newblock Nearly work-efficient parallel algorithm for digraph reachability.
\newblock {\em {SIAM} J. Comput.}, 49(5), 2020.

\bibitem[FN18]{ForsterN18}
Sebastian Forster and Danupon Nanongkai.
\newblock A faster distributed single-source shortest paths algorithm.
\newblock In Mikkel Thorup, editor, {\em 59th {IEEE} Annual Symposium on
  Foundations of Computer Science, {FOCS} 2018, Paris, France, October 7-9,
  2018}, pages 686--697. {IEEE} Computer Society, 2018.

\bibitem[GIL{\etalchar{+}}21]{GrandoniILPU21}
Fabrizio Grandoni, Giuseppe~F. Italiano, Aleksander Lukasiewicz, Nikos
  Parotsidis, and Przemyslaw Uznanski.
\newblock All-pairs {LCA} in dags: Breaking through the
  \emph{O}(\emph{n}\({}^{\mbox{2.5}}\)) barrier.
\newblock In D{\'{a}}niel Marx, editor, {\em Proceedings of the 2021 {ACM-SIAM}
  Symposium on Discrete Algorithms, {SODA} 2021, Virtual Conference, January 10
  - 13, 2021}, pages 273--289. {SIAM}, 2021.

\bibitem[GW20]{GutenbergW20a}
Maximilian~Probst Gutenberg and Christian Wulff{-}Nilsen.
\newblock Decremental {SSSP} in weighted digraphs: Faster and against an
  adaptive adversary.
\newblock In Shuchi Chawla, editor, {\em Proceedings of the 2020 {ACM-SIAM}
  Symposium on Discrete Algorithms, {SODA} 2020, Salt Lake City, UT, USA,
  January 5-8, 2020}, pages 2542--2561. {SIAM}, 2020.

\bibitem[Hes03]{Hesse03}
William Hesse.
\newblock Directed graphs requiring large numbers of shortcuts.
\newblock In {\em Proceedings of the Fourteenth Annual {ACM-SIAM} Symposium on
  Discrete Algorithms, January 12-14, 2003, Baltimore, Maryland, {USA}}, pages
  665--669. {ACM/SIAM}, 2003.

\bibitem[HKN14]{HenzingerKN14}
Monika Henzinger, Sebastian Krinninger, and Danupon Nanongkai.
\newblock Sublinear-time decremental algorithms for single-source reachability
  and shortest paths on directed graphs.
\newblock In David~B. Shmoys, editor, {\em Symposium on Theory of Computing,
  {STOC} 2014, New York, NY, USA, May 31 - June 03, 2014}, pages 674--683.
  {ACM}, 2014.

\bibitem[HKN15]{HenzingerKN15}
Monika Henzinger, Sebastian Krinninger, and Danupon Nanongkai.
\newblock Improved algorithms for decremental single-source reachability on
  directed graphs.
\newblock In Magn{\'{u}}s~M. Halld{\'{o}}rsson, Kazuo Iwama, Naoki Kobayashi,
  and Bettina Speckmann, editors, {\em Automata, Languages, and Programming -
  42nd International Colloquium, {ICALP} 2015, Kyoto, Japan, July 6-10, 2015,
  Proceedings, Part {I}}, volume 9134 of {\em Lecture Notes in Computer
  Science}, pages 725--736. Springer, 2015.

\bibitem[HP18]{HuangP18}
Shang{-}En Huang and Seth Pettie.
\newblock Lower bounds on sparse spanners, emulators, and diameter-reducing
  shortcuts.
\newblock In David Eppstein, editor, {\em 16th Scandinavian Symposium and
  Workshops on Algorithm Theory, {SWAT} 2018, June 18-20, 2018, Malm{\"{o}},
  Sweden}, volume 101 of {\em LIPIcs}, pages 26:1--26:12. Schloss Dagstuhl -
  Leibniz-Zentrum f{\"{u}}r Informatik, 2018.

\bibitem[HP19]{HuangP19}
Shang{-}En Huang and Seth Pettie.
\newblock Thorup-zwick emulators are universally optimal hopsets.
\newblock {\em Inf. Process. Lett.}, 142:9--13, 2019.

\bibitem[KS97]{KleinS97}
Philip~N. Klein and Sairam Subramanian.
\newblock A randomized parallel algorithm for single-source shortest paths.
\newblock {\em J. Algorithms}, 25(2):205--220, 1997.

\bibitem[KS21]{KarczmarzS21}
Adam Karczmarz and Piotr Sankowski.
\newblock A deterministic parallel {APSP} algorithm and its applications.
\newblock In D{\'{a}}niel Marx, editor, {\em Proceedings of the 2021 {ACM-SIAM}
  Symposium on Discrete Algorithms, {SODA} 2021, Virtual Conference, January 10
  - 13, 2021}, pages 255--272. {SIAM}, 2021.

\bibitem[LJS19]{LiuJS19}
Yang~P. Liu, Arun Jambulapati, and Aaron Sidford.
\newblock Parallel reachability in almost linear work and square root depth.
\newblock In David Zuckerman, editor, {\em 60th {IEEE} Annual Symposium on
  Foundations of Computer Science, {FOCS} 2019, Baltimore, Maryland, USA,
  November 9-12, 2019}, pages 1664--1686. {IEEE} Computer Society, 2019.

\bibitem[Ras10]{Raskhodnikova10}
Sofya Raskhodnikova.
\newblock Transitive-closure spanners: {A} survey.
\newblock In Oded Goldreich, editor, {\em Property Testing - Current Research
  and Surveys}, volume 6390 of {\em Lecture Notes in Computer Science}, pages
  167--196. Springer, 2010.

\bibitem[SS99]{ShiS99}
Hanmao Shi and Thomas~H. Spencer.
\newblock Time-work tradeoffs of the single-source shortest paths problem.
\newblock {\em J. Algorithms}, 30(1):19--32, 1999.

\bibitem[Tho92]{Thorup92}
Mikkel Thorup.
\newblock On shortcutting digraphs.
\newblock In Ernst~W. Mayr, editor, {\em Graph-Theoretic Concepts in Computer
  Science, 18th International Workshop, {WG} '92, Wiesbaden-Naurod, Germany,
  June 19-20, 1992, Proceedings}, volume 657 of {\em Lecture Notes in Computer
  Science}, pages 205--211. Springer, 1992.

\bibitem[Tho04]{Thorup04}
Mikkel Thorup.
\newblock Fully-dynamic all-pairs shortest paths: Faster and allowing negative
  cycles.
\newblock In Torben Hagerup and Jyrki Katajainen, editors, {\em Algorithm
  Theory - {SWAT} 2004, 9th Scandinavian Workshop on Algorithm Theory,
  Humlebaek, Denmark, July 8-10, 2004, Proceedings}, volume 3111 of {\em
  Lecture Notes in Computer Science}, pages 384--396. Springer, 2004.

\bibitem[TZ06]{ThorupZ06}
Mikkel Thorup and Uri Zwick.
\newblock Spanners and emulators with sublinear distance errors.
\newblock In {\em Proceedings of the Seventeenth Annual {ACM-SIAM} Symposium on
  Discrete Algorithms, {SODA} 2006, Miami, Florida, USA, January 22-26, 2006},
  pages 802--809. {ACM} Press, 2006.

\bibitem[UY91]{UllmanY91}
Jeffrey~D. Ullman and Mihalis Yannakakis.
\newblock High-probability parallel transitive-closure algorithms.
\newblock {\em {SIAM} J. Comput.}, 20(1):100--125, 1991.

\end{thebibliography}

\newpage

\begin{appendix}

\newpage
\section{Missing Proofs for Approximate Hopset Constructions}

\subsection{Efficient Implementation of Alg. $\HopsetSmallHop$ (Proof of Theorem \ref{lem:setQ})}\label{app:hopset-efficient}

%We start by showing that Step 1 of Algorithm $\HopsetSmallHop$ can be implemented in time $\tilde{O}(n^3)$.
In this section, we prove Theorem \ref{lem:setQ} for computing a nice path collection $\mathcal{Q}$. 
Our algorithm is based on the dynamic APSP algorithm of Thorup (\cite{Thorup04}). In particular, our algorithm needs the decremental dynamic APSP whose properties are summarized below. 
\begin{thm}\label{thm:dyn-thorup}[Dynamic APSP Algorithm, \cite{Thorup04}]
Given a digraph $G=(V,E,\omega)$ with possibly negative weights, there is a preprocessing algorithm that in $\widetilde{O}(n^3)$ time computes the APSP matrix $A$ of $G$. In addition, there is an update algorithm that upon any removal of a vertex $v$, computes the update APSP matrix $A'$ of $G \setminus \{v\}$ in amortized time of $\tilde{O}(n^2)$. 
\end{thm}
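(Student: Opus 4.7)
The plan is to first handle the preprocessing and then design the decremental update routine, amortizing work against the total number of path changes.

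For the preprocessing step, I would use Johnson's reweighting technique to reduce the negative-weight APSP problem to a non-negative one. Specifically, introduce a virtual source $s$ with zero-weight edges to all vertices of $V$, run Bellman--Ford once from $s$ in $\widetilde{O}(n \cdot m) = \widetilde{O}(n^3)$ time to obtain potentials $h(v) = \dist(s,v)$, and reweight every edge by $w'(u,v) = w(u,v) + h(u) - h(v) \ge 0$. Then run Dijkstra with Fibonacci heaps from each vertex in $\widetilde{O}(n^2)$ time per source, for a total of $\widetilde{O}(n^3)$. The shortest-path structure with respect to $w'$ is identical to that with respect to $w$, so we obtain the APSP matrix $A$ within the claimed budget. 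A useful invariant is that the initial potentials $h(\cdot)$ remain valid reweighting potentials after any sequence of \emph{vertex deletions}, because subgraph restriction cannot create negative cycles and cannot violate the triangle inequality that $h$ witnesses.

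For the decremental update, the plan is to maintain, besides $A$, an explicit ``shortest-path witness'' for each ordered pair: for $(u,v)$ store the currently recorded $u$-$v$ shortest path $\pi(u,v)$, together with inverted indices listing, for each vertex $x$, the set $\mathrm{Uses}(x) = \{(u,v) : x \in \pi(u,v)\}$. When $x$ is deleted, only pairs in $\mathrm{Uses}(x)$ need their distance refreshed; all other entries of $A$ remain correct. To recompute distances for pairs in $\mathrm{Uses}(x)$, I would employ the ``locally shortest paths'' framework: maintain, for each pair $(u,v)$, a short priority queue of candidate replacement paths of the form $\pi(u,y) \circ (y,z) \circ \pi(z,v)$ (one-edge extensions), so that the new shortest $u$-$v$ path can be identified in polylogarithmic time per affected pair by popping invalidated candidates and inserting fresh ones based on the updated $A[u][\cdot]$ and $A[\cdot][v]$ rows/columns. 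The per-deletion work is proportional to $|\mathrm{Uses}(x)|$ plus the cost of rebuilding the associated candidate queues, which is $\widetilde{O}(n)$ per affected pair.

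The amortization is the crux. The potential I would use is the total number of distinct distance values ever recorded over all pairs. Since the graph is decremental, for each fixed pair $(u,v)$ the sequence of values $A[u][v]$ is monotonically nondecreasing; and since each value corresponds to the length of some simple path, the number of distinct values per pair over the whole deletion sequence is bounded by the number of simple paths between them in the original graph that are ``locally shortest'' at some point in time. A standard charging argument (as in Demetrescu--Italiano-style analyses) shows that the total number of locally shortest path changes across all pairs and all deletions is $\widetilde{O}(n^3)$; combined with $\widetilde{O}(1)$ queue operations per change and $\widetilde{O}(n)$ overhead per affected pair, the aggregate work over the entire sequence of $\le n$ deletions is $\widetilde{O}(n^3)$, giving the amortized $\widetilde{O}(n^2)$ per deletion.

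The main obstacle I expect is the charging argument for locally shortest paths in the presence of negative edge weights: the classical analysis uses non-negativity to bound the number of ``historically shortest'' path extensions. I would overcome this by working throughout in the Johnson-reweighted graph, where all edge weights are non-negative and remain so under deletions; the shortest-path structure is preserved, so the bookkeeping and amortization inherit the non-negative-weight analysis while the externally reported distances in $A$ are obtained by inverting the reweighting, i.e., $A[u][v] = A'[u][v] - h(u) + h(v)$, in $O(1)$ time per query.
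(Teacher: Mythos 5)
This theorem is not proved in the paper at all: it is imported verbatim from Thorup's dynamic all-pairs shortest paths work and used as a black box (Alg.\ $\DynamicAPSP$), so there is no internal proof to compare yours against. Your reconstruction does follow the right general strategy for that cited result — Johnson potentials to eliminate negative weights (your observation that potentials computed once remain valid under vertex deletions is correct and clean), followed by Demetrescu--Italiano-style maintenance of locally shortest paths via candidate priority queues.

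As a standalone proof, however, your sketch has a genuine gap in the amortization. You budget $\widetilde{O}(n)$ of ``overhead per affected pair'' for rebuilding candidate queues; a single vertex deletion can affect $\Theta(n^2)$ pairs, and there are up to $n$ deletions, so this line item alone totals $\widetilde{O}(n^4)$, not the claimed $\widetilde{O}(n^3)$. The correct accounting charges $\widetilde{O}(1)$ per change to the set of locally shortest (more precisely, locally \emph{historical}) paths and then proves that the total number of such changes over the whole update sequence is $\widetilde{O}(n^3)$. That bound is the entire technical content of the Demetrescu--Italiano/Thorup result: it requires the notion of historical shortest paths, a proof that each vertex update creates only $O(zn^2)$ new locally historical paths where $z$ bounds the number of historical paths per pair, and a smoothing/rebuilding scheme to keep $z$ polylogarithmic. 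Invoking this as ``a standard charging argument'' leaves the heart of the theorem unproved, and your proposed potential (the number of distinct distance values ever recorded) does not by itself control the number of candidate-queue operations, since a pair's witness path can change many times without its distance changing. To make this self-contained you would need to carry out the historical-paths analysis (the decremental setting does simplify it, since distances are monotone nondecreasing), or else simply cite the result as the paper does.
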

Using this dynamic algorithm, one can support $O(n)$ vertex deletions, in total of $\tilde{O}(n^3)$ time. 
Throughout, we refer to the algorithm of Theorem \ref{thm:dyn-thorup} by Alg. $\DynamicAPSP$.
We focus on the efficient implementation of Step (1) of Alg. $\HopsetSmallHop$ by computing a nice path collection $\mathcal{Q}$. 
%Note that the time complexity of $O(n^3)$ matches the time complexity of combinatorial algorithms for computing APSP. 
%We next provide the algorithm description for computing these paths. 
%\newpage
\begin{mdframed}[hidealllines=false,backgroundcolor=gray!00]
\center \textbf{Computation of the Nice Path Collection $\mathcal{Q}$}
\begin{flushleft}
\textbf{Input:} An $n$-vertex weighted $G=(V,E,\omega)$ with integer weights in $[1,W]$. \\
\textbf{Output:} A Nice Path collection $\mathcal{Q}$.
\end{flushleft}

\vspace{-2pt}
\begin{enumerate}
\item Apply the preprocessing algorithm of $\DynamicAPSP$ to compute $G^{\star}_{\omega}$. Let $B$ be the APSP matrix of $G$. 

\item Let $\widetilde{G}$ be the graph obtained from $G^{\star}_{\omega}$ by replacing each vertex $v_i \in G^{\star}_{\omega}$ by two copies $v^{\inn}_i$ and $v^{\outt}_i$. The two copies are connected by a directed edge $(v^{\inn}_i, v^{\outt}_i)$ of weight $1/n^2$. Connect the out-copy $v^{\outt}_j$ of every incoming neighbor $v_{j} \in N_{in}(v_i, G^{\star}_{\omega})$ to $v^{\inn}_i$. In the same manner, connect the in-copy $v^{\inn}_k$ of every outgoing neighbor $v_{k}\in N_{out}(v_i, G^{\star}_{\omega})$ to $v^{\outt}_i$. 
%Split every vertex $v_i$ in $H$ into vertices $v^{in}_i$ (denoted as an in-vertex) and $v^{out}_i$ (denoted as out-vertex) such that the in-edges of $v_i$ will enter  $v^{in}_i$ and the out-edges of $v_i$ will start from  $v^{out}_i$.  Set a directed edge $(v^{in}_i,v^{out}_i)$ of weight $-\frac{1}{n^2}$. Denote the resulting graph as $H'$.

\item Apply the preprocessing algorithm of $\DynamicAPSP$ on $\widetilde{G}$, and let $A$ be the APSP matrix. Observe that $B_{i,j} = \left\lceil A_{v^{in}_i ,v^{out}_j } \right\rceil$. The weight assignment of $\widetilde{G}$ favors shortest paths of maximal number of hops.
%Create using the Thorup algorithm a matrix $A$ to be the APSP matrix of graph $H'$ (notice that $B_{i,j} = \lceil A_{v^{in}_i ,v^{out}_j }\rceil $ as the minimum edge weight in $H$ is $1$).

\item Define a matrix $B'$ where $B'_{i,j}$ equals to the largest number of hops among all $v_{i}$-$v_j$ shortest paths in $\widetilde{G}$. The value $B'_{i,j}$ can be immediately deduced from the value of $A_{v^{in}_i ,v^{out}_j}$. 
That is, letting $A_{v^{\inn}_i ,v^{\outt}_j} = q-h/n^2$ for $q,h \in \mathbb{N}$, then $B'_{i,j} = h$.

\item Let $\mathcal{I}=\{ (i,j) ~\mid~ B'_{i,j} = \beta / 12\}$ and let $(i,j) \in \mathcal{I}$ be such that $A_{v^{in}_i ,v^{out}_j}$ is minimal overall pairs $(i',j')\in \mathcal{I}$. That is, 
$$A_{v^{\inn}_i ,v^{\outt}_j}\leq A_{v^{\inn}_{i'} ,v^{\outt}_{j'}}, \forall (i',j')\in \mathcal{I}~.$$
In the case where $|\mathcal{I}|=\emptyset$, abort. 

%Find $I=(i,j)$ such that $B'_{i,j} = \beta / 12$ and under this condition $B_{i,j}$ is minimal. The indices of $I$ corresponds to a shortest path $P$ from $v_i$ to $v_j$ in $H$ which contains $\beta / 12$ hops  and furthermore this is a minimal length path over the choice of vertices $v_i,v_j$ satisfying these conditions. If $I$ does not exist stop the procedure.
\item Define the corresponding $v_i$-$v_j$ shortest path $P \subseteq G$ in the following manner:
suppose that we already constructed the path up to vertex $v_{i'}$, then we seek in $B$ an entry $B_{i',j'}$ for which $B_{i',j'} + B_{j',j} = B_{i',j}$ and furthermore we require that 
$B'_{j',j} = B'_{i',j'} -1$. We choose the next vertex on the path to be $v_{j'}$.

\item Add the path $P$ to collection $\mathcal{Q}$.

\item Delete the vertices $v^{\inn}_{i}, v^{\outt}_i$ from $\widetilde{G}$ for every $v_i \in P$, by applying Alg. 
$\DynamicAPSP$. This results in an updated APSP matrix of $A$.

%delete its corresponding in and out vertices from $H'$ by applying the Thorup Algorithm and set $A$ to be the APSP matrix of the new $H'$. 
\item Delete the vertices of $P$ from graph $G^{\star}_{\omega}$ and delete the corresponding rows and columns from matrix $B$.
\item Goto step 4.
\end{enumerate}
\end{mdframed}
The paths in $\mathcal{Q}$ are ordered based on the insertion time. That is, denote $\mathcal{Q}=\{P'_1,\ldots, P'_k\}$ where $P'_i$ is the $i^{th}$ path added to $\mathcal{Q}$ in Step (7). 
%\iffalse
%The correctness of Algorithm M-APSP follows from the fact that each vertex in the original graph $G$ decreases the path length between vertices in $G'$ by $-\frac{1}{n^2}$ and hence 
%if $P'$ is a shortest path between vertices $v^{in}_i$ and $v^{out}_j$ then $len(P) > len(P') > len(P) - 1$ where $P$ is the path in $G$ corresponding to path $P'$ in $G'$, while each edge in the original graph is of weight at least $1$. In particular this implies that path $P$ is a shortest path from $v_i$ to $v_j$ which contains the maximum possible number of hops as each hop reduces the path length in $G'$ by a strictly negative number. \\
%Now we will show how to perform step 1 of Algorithm $\HopsetSmallHop$ in time $O(n^4)$ by using the following procedure.
%\begin{enumerate}
%\item Compute $A = G^{\star}_{\omega}$.
%\item Run Algorithm M-APSP($A$) to get predecessor matrix $B$. Compute all $O(n^2)$ shortest paths with maximum number of hops using $B$ and choose the shortest of the shortest paths in $A$ with exactly $\beta/12$ hops. If such path does not exist terminate the procedure.
%\item Remove the vertices of the chosen path from $A$ and goto step 2. 
%\item Denote the resulting path collection by $\mathcal{Q} = \{P'_1,P'_2,\ldots,P'_k\}$, and let $V''=\bigcup\limits_1^k V(P'_i)$.
%\end{enumerate}
%\fi
\paragraph{Correctness.} We show that the output set $\mathcal{Q}$ is indeed nice, and satisfies properties (N1-N6).
\begin{itemize}
\item Property (N1) follows immediately.
\item Property (N2) follows from the fact that the path from $v_i$ to $v_j$ that we choose satisfies $B'_{i,j} = \beta/12$ and $B'_{i,j}$ is the number of hops in our path.
\item Property (N3) follows from Step (6) as in this step we construct a shortest path (by the fact that $B_{i,j}$ is the length of a shortest path from $v_i$ to $v_j$).
\item Properties (N4),(N5) follows from the fact that in each iteration of Step (5) we choose indices of a path of minimal length over the relevant choice.
That is we choose $(i,j)$ such that $A_{v^{in}_i ,v^{out}_j}$ is minimal overall pairs $(i',j')\in \mathcal{I}$ and the path length between $v_i$ and $v_j$ satisfies $B_{i,j} = \left\lceil A_{v^{in}_i ,v^{out}_j } \right\rceil$.
\item Property (N6) follows from the fact that when the procedure terminates the remaining graph $G^{\star}_{\omega}$ contains no shortest path on $\beta/12$ hops for otherwise the procedure would not have aborted in Step (5).
\end{itemize}

\paragraph{Running Time.}
We now go over all steps of the procedure above in order to bound the running time.
Step (1) runs in time $\tilde{O}(n^3)$ due to the amortized $\tilde{O}(n^2)$ running time of Alg. $\DynamicAPSP$ (see Theorem \ref{thm:dyn-thorup}). Step (2) is implemented in time $O(n^2)$.
By Theorem \ref{thm:dyn-thorup}, Step (3) is implemented in $\tilde{O}(n^3)$ time. 
Steps (4-6) are implemented in $O(n^2)$ time. Since these steps are implemented at most $n$ many times, it takes $O(n^3)$ in total. By Theorem \ref{thm:dyn-thorup} again, Step (7) runs in time  $\tilde{O}(n^3)$. 
Finally, Step (8) runs in $O(n^2)$ and since it is implemented for $O(n)$ times, we get $O(n^3)$ time in total. 
This completes the proof of Lemma \ref{lem:setQ}.

\subsection{Proof of Theorem \ref{thm:main_hopset_bound_large_D}}\label{app:missing-hopsets}
\APPENDLARGEHOP

%\newpage
%\input{directed_shortcuts.tex}

\end{appendix}

%\section{Polynomial subcases of TSS$_P$}\label{TSSP_polycases2}

\end{document}